\newtheorem{prop}{Proposition}
\newtheorem{thm}{Theorem}
\newtheorem{defi}{Definition}
\newtheorem{cor}{Corollary}
\newtheorem{lem}{Lemma}
\newtheorem{ass}{Assumption}
\newtheorem{remark}{Remark}
\newcommand{\quotes}[1]{``#1''}
\newcommand{\prt}[1]{\left(#1\right)}
\newcommand{\brc}[1]{\left\{#1\right\}}
\newcommand{\brk}[1]{\left[#1\right]}
\newcommand{\bigprt}[1]{\big(#1\big)}
\newcommand{\bigbrk}[1]{\Big[#1\Big]}
\newcommand{\E}{\mathbb{E}}
\newcommand{\Ep}[1]{\E\brk{#1}}
\newcommand{\condE}[2]{\E\brk{#1|#2}}
\newcommand{\condP}[2]{P\brk{#1|#2}}
\newcommand{\lr}{\lambda_r}
\newcommand{\lc}{\lambda_c}
\newcommand{\lp}{\lambda_p}
\newcommand{\rratio}{\rho}          
\newcommand{\lclr}{\tilde \rho} 
\newcommand{\lrlc}{\eta}         
\newcommand{\Ping}{Ping}
\newcommand{\Gossip}{Gossip}
\newcommand{\Kset}[2]{\omega_{#1}^{#2}}
\newcommand{\Ksetie}{\Kset{i}{\epsilon^*}}
\newcommand{\PingKset}[1]{\Kset{#1}{\Ping}}
\newcommand{\GossipKset}[1]{\Kset{#1}{\Gossip}}
\newcommand{\Tji}[1]{T_{j\rightarrow i}^{#1}}
\newcommand{\Tjie}{\Tji{\epsilon^*}}
\newcommand{\Fjit}[1]{F_{j\rightarrow i}^{t,#1}}
\newcommand{\fjit}[1]{f_{j\rightarrow i}^{t,#1}}
\newcommand{\PingTji}{\Tji{\epsilon^{\Ping}}}
\newcommand{\PingFjit}{\Fjit{\Ping}}
\newcommand{\Pingfjit}{\fjit{\Ping}}
\newcommand{\GossipTji}{\Tji{\epsilon^{\Gossip}}}
\newcommand{\GossipFjit}{\Fjit{\Gossip}}
\newcommand{\Gossipfjit}{\fjit{\Gossip}}
\newcommand{\ExistInfo}[2]{\bar N_{#1}^{#2}}
\newcommand{\NoInfo}[2]{N_{#1}^{#2}}
\newcommand{\InfoExch}[2]{{#1\to#2}}
\newcommand{\yi}[1]{y_i^{#1}}
\newcommand{\Cji}[1]{C_{j\rightarrow i}^{#1}}
\newcommand{\R}{\mathbb R}
\newcommand{\ONE}{\mathds{1}}
\newcommand{\Melem}[3]{\brk{#1}_{#2,#3}}
\newcommand{\Velem}[2]{\brk{#1}_{#2}}
\newcommand{\vect}[1]{\mathbf{#1}}
\title{
Fundamental Performance Limitations for\\ Average Consensus in Open Multi-Agent Systems
}
\author{Charles~Monnoyer~de~Galland and Julien~M.~Hendrickx%
\thanks{ICTEAM institute, UCLouvain (Belgium). This work was supported by \quotes{Communaut\'e fran\c{c}aise de Belgique - Actions de Recherche Concert\'ees}. C. M. is a FRIA fellow (F.R.S.-FNRS). This work is supported by the ``\textit{RevealFlight}'' ARC at UCLouvain. Email adresses: \texttt{charles.monnoyer@uclouvain.be}, \texttt{julien.hendrickx@uclouvain.be}.
}%
}
\date{January 2021}
\begin{document}

\maketitle

\begin{abstract}
We derive \textit{fundamental performance limitations} for intrinsic average consensus problems in open multi-agent systems, which are systems subject to frequent arrivals and departures of agents.
Each agent holds a value, and the objective of the agents is to collaboratively estimate the average of the values of the agents presently in the system.
Algorithms solving such problems in open systems are poised to never converge because of the permanent variations in the composition, size and objective pursued by the agents of the system.
We provide lower bounds on the expected Mean Squared Error achievable by any averaging algorithms in open systems of fixed size.
Our derivation is based on the analysis of a conceptual algorithm that would achieve optimal performance for a given model of replacements.
We obtain a general bound that depends on the properties of the model defining the interactions between the agents, and instantiate that result for all-to-one and one-to-one interaction models.
A comparison between those bounds and algorithms implementable with those models is then provided to highlight their validity.
\end{abstract}
\begin{IEEEkeywords}
Open multi-agent systems, Agents and autonomous systems, Cooperative control, Sensor Networks, Markov processes.
\end{IEEEkeywords}

\section{Introduction}
\label{Sec:Intro}
Multi-agent systems are a powerful tool that allows modelling and solving a wide variety of problems in various domains.
Those include learning or tracking in sensor networks \cite{ApMAS:WSN,ApMAS:TrackingWSN}, vehicle coordination \cite{ApMAS:vehicle} or formation control through consensus \cite{ApMas:Consensus_based_formation_control:CDC2019}, social phenomena such as the study of opinion dynamics \cite{ApMAS:CTAvgOpinionDynamics,ApMAS:Opinion&Confidence}, and many more.

Most results stated around multi-agent systems stand for asymptotic properties and assume that their composition remains unchanged throughout the whole process, contrasting with the fact that some of the most desired features of such systems are scalability and flexibility.
This apparent contradiction is justified if the process and the arrivals and departures of agents take place at different time scales: the analyses then stand for a period of time where no such event occurred.
However, as the system size grows, both the characteristic length of the process and the probability for agents to leave or join the system tend to increase, so that arrivals and departures become significant in the analysis of large systems.
Think for instance of birth rates in living systems, which are expected to grow with the system size. 
Moreover, the constant composition assumption is also challenged by some processes where communications are difficult or infrequent by nature or due to extreme environments: those thus take place at a time scale comparable to that of arrivals and departures.
Some systems are also inherently subject to arrivals and departures, such as \textit{e.g.} multi-vehicles systems on a stretch of road where vehicles permanently join and leave.

We call \quotes{\textit{Open multi-agent systems}} systems subject to arrivals and departures of agents.
Such configurations raise many challenges as those repeated arrivals and departures imply important differences in analyses and algorithm design.
First, each arrival and departure has an impact on both the size and the state of the system, which prevents it from converging.
It thus becomes challenging to analyse its evolution, and new tools are needed as a consequence.
Second, the design of algorithms is also impacted by arrivals and departures, as those can also modify the objective that is pursued by the agents.
Additionally, it may be necessary for algorithms to deal with either outdated information that must be erased if the desired output of the algorithm is defined by the agents presently in the system; or in opposition to deal with losses of information due to departures, and that should be remembered.
Finally, some results cannot be easily extended to open systems, such as \textit{e.g.}, the analysis of Gossip interactions and the design of MAX-consensus algorithms, respectively studied in open systems in \cite{OMAS:Gossip:DeterministicAllerton,OMAS:Gossip:RandomCDC} and \cite{OMAS:MAX}.

\subsection{State of the art}

Arrivals and departures have been considered in different applications, such as the THOMAS architecture designed to maintain connectivity into P2P networks \cite{ApOMAS:OpenP2P}, or with VTL for autonomous cars to deal with cross-sections \cite{ApOMAS:VTL}. 
Analyses on social phenomena subject to arrivals and departures were performed in \cite{OMAS:sociophysics}, but those are mostly simulation based.
More generally, the Plug and Play implementation is getting attention for designing structures on systems where subsystems can be plugged or unplugged \cite{OMAS:PnP:DecentralizedMPC,OMAS:PnP:PartisionBasedDistrKalman}.

Different settings have been considered to model and study open multi-agent systems.
A first model consists in having a finite number of \quotes{potentially present} agents, which are either in or out of the system at any given time. This implies that leaving agents may come back in the system, and eventually every arrival concerns a returning agent.
The sensitivity of consensus algorithms to noise in open systems was studied in \cite{OMAS:ImpactOfNoise_RandomConsensusAlgo} in this setting.
In an alternative model, that we follow here, every arriving agent is a new one, and leaving agents never come back.
The problem of convergence in open system was studied in this context, respectively in \cite{OMAS:Median_consensus:CDC2019} for the median consensus which is shown to be tracked with bounded error under some conditions, and in \cite{OMAS:Gossip:DeterministicAllerton,OMAS:Gossip:RandomCDC} where Gossip interactions are studied through the analysis of size-independent descriptors.
Open dynamic consensus towards the average and the maximum were analyzed in \cite{OMAS:OpenDynamicConsensus:Franseschelli-Frasca,OMAS:DynamicMaxMinSizeEstimation} using that model.
Algorithm design was also explored for MAX-consensus in open systems using additional variables in \cite{OMAS:MAX} by investigating the probability of convergence if the system eventually closes.
Finally, decentralized optimization in this setting is receiving attention, such as in \cite{DO:online-varyingFunctions} where time-varying objective functions are considered, and in \cite{OpenDo:OpenDGDStability} where the stability of decentralized gradient descent is analyzed.

However, efficiently studying the performance of algorithms, and the analysis of open systems in general, remains a challenge.
Since agents cannot instantaneously react to perturbations, and since the objective they pursue may be varying, converging to an exact result is not achievable.
An alternative way to measure the performance of algorithms consists in ensuring the error remains within some range, as the usual convergence is no more a relevant criterion.

\subsection{Contribution and preliminary results}
We consider the average consensus problem in open system, \textit{i.e.}, the collaborative estimation of the average of the values owned by the agents \textit{presently in the system at some time}, under arrivals and departures of agents.
We assume agents can exchange information with each other in order to build their estimate of the average, and that the system size is fixed, so that each departure of an agent is instantaneously followed by an arrival (we call such event \quotes{replacement}).
We establish fundamental performance limitations for that problem, \textit{i.e.}, lower bounds on the expected performance of any algorithm.

Our contributions are as follows.
We provide in Section~\ref{Sec:General} a general lower bound on the expected performance of any algorithm that can be implemented with a wide variety of interaction schemes defining the information exchanges in the system.
Its derivation relies on the notion of \emph{knowledge set}, that we introduce in Section~\ref{Sec:Ksets}, and which aims at modelling the information made potentially available to an agent through information exchanges.
We then particularize our bound to two particular interaction schemes that allow the implementation of pairwise Gossip interactions \cite{Avg:Gossip}, and compare the results with their actual performance in Section~\ref{Sec:App}.

In a preliminary version of this work \cite{OMAS:CDC2019:FPL_intrAVG}, we assumed the system had been running since $-\infty$, and derived lower bounds on the expected asymptotic performance of averaging algorithms.
Our main new contribution as compared to \cite{OMAS:CDC2019:FPL_intrAVG} are
(i) the derivation of tighter bounds for pairwise interaction rules as compared to the Infection model of \cite{OMAS:CDC2019:FPL_intrAVG};
(ii) the generalization of our result to lower bounds on the performance at all time, including transient performance, as opposed to only asymptotic performance;
(iii) the use of knowledge sets, that allow for simpler and clearer proofs and leads to more general results extendable to any interaction model.

This analysis represents a first step towards understanding open systems, as those limitations can be used as a quality criterion for algorithm design in open systems, and highlight the main bottlenecks that could arise in their analysis.
More generally, Gossip interactions and consensus in general are used as intermediate steps in many applications, including decentralized optimization \cite{OpenDo:OpenDGDStability,DO:dualAvg} where optimization steps and consensus steps alternate, or estimation over wireless networks \cite{ApMAS:TrackingWSN} and decentralized control \cite{ApMAS:vehicle,ApMas:Consensus_based_formation_control:CDC2019} relying on achieving consensus on the different measurements or on the distance between vehicles.
Hence, we expect the techniques we use to be extendable to more advanced tasks in open systems, and to be a starting point for their analysis.

\section{Notations}
\label{Sec:Notations}
We use $P(A)$ to denote the probability of an event $A$, and $\condP{A}{B}$ its probability conditional to the event $B$.
We denote by $\bar A$ the complementary event of $A$, so that $P(\bar{A})=1-P(A)$.

We denote by $F_X(x) = P\brk{X\leq x}$ the \textit{Cumulative Distribution Function} (CDF) of a random variable $X$, and $f_X(x) = \frac{d}{dx}F_X(x)$ its \textit{Probability Density Function} (PDF).
By definition, $\lim_{x\to\infty} F_X(x)=1$; we will introduce in Section~\ref{Sec:General} the \quotes{pseudo-CDF} of $X$ that lifts this condition in order to ease the representation of events with a positive probability of never occurring.
We use $\Ep{X}$ to denote the expected value of $X$, and $\condE{X}{Y}$ its expected value conditional to $Y$.

We denote by $\alpha(t^-)$ and $\alpha(t^+)$ the value of a time-varying signal or process $\alpha$ respectively just before and after an event modifying it at time $t$.

We use $\Velem{\vect v}{k}$ to denote the $k$-th element of a vector $\vect v$ and $\Melem{M}{i}{j}$ the element of the $i$-th row and $j$-th column of matrix $M$.
Moreover, $\ONE$ denotes the vector constituted only of ones, and $\mathbf {e_k}$ the vector such that $\brk{\mathbf{e_k}}_{k}=1$, and $\brk{\mathbf{e_k}}_{i}=0$ for all $i\neq k$.

\section{Problem statement}
\label{Sec:Stt}

\subsection{System description}
\label{sec:Stt:System}

We consider a system initialized at time $t=0$ with $N$ agents labelled from $1$ to $N$, subject to arrivals and departures of agents.
Let $S(t)$ denote the set of the labels of the agents in the system at time $t$, which satisfies thus $S(0)=\brc{1,\ldots,N}$, and evolves as agents join and leave the system. 
In particular, every arrival at some time $t$ results in $S(t^+) = S(t^-)\cup\brc{i'}$ with $i'\notin S(\tau)$, $\forall \tau\leq t$, and the departure of an agent $i'\in S(t^-)$ at time $t$ results in $S(t^+) = S(t^-)\setminus\brc{i'}$.
Each agent $i$ is attributed an i.i.d. constant value $x_i$ when it enters the system.
This value is randomly drawn from a distribution whose mean is assumed to be a known constant, set to zero without loss of generality, and with variance $\E x_i^2=\sigma^2$.
In this work, for the sake of simplicity, we assume the system is subject to only two types of events -- replacements and information exchanges -- and its size is thus constant.

\paragraph{Replacements}
Each agent $i$ that leaves the system is immediately replaced by a new agent $i'$, so that \textit{the system size remains constant}.
The agent $i'$ is given a new i.i.d. random value $x_i'$ drawn from the same distribution as that of agent $i$.

\paragraph{Information exchanges}
Agents can exchange information with each other at discrete times, and we denote by $\InfoExch ji$ the event that agent $j$ sends information to agent $i$ through such exchange.
There is no limit to what agent $j$ can send, so that it can potentially send everything it knows.
The occurrence of those exchanges can be defined according to different models, \textit{e.g.}, in a all-to-one or a pairwise manner (respectively considered in Sections \ref{Sec:App:Ping} and \ref{Sec:App:SIS}).

The goal of the agents in the system is to collaboratively estimate the average of the values held by the agents \emph{presently in the system at time $t$}, defined as 
\begin{equation}
    \label{eq:Stt:System:Average}
    \bar x(t) = \frac{1}{|S(t)|}\sum_{i\in S(t)}x_i.
\end{equation}
\vspace{-0.5cm}
\subsection{Problem reformulation}
We directly introduce a relaxation that allows simplifying the system description: we assume that an agent joining the system at a replacement knows the label of the one it replaces (or equivalently inherits it).
Hence, the replacement of an agent $i$ can be equivalently formulated as (i) \emph{the attribution of a new value $x_i'$ to that agent} and (ii) \emph{the erasure of all other variables it holds}.
The labels of the agents in the system thus remain $\brc{1,\ldots,N}$ at all times, and the value held by an agent $i$ becomes a piecewise constant value $x_i(t)$ that remains constant between replacements.
We note a major difference between this formulation and (closed) systems where the signal $x_i(t)$ held by each agent $i$ is time-varying, as from (ii) each variation of $x_i(t)$ (\textit{i.e.}, replacement of $i$) is coupled with the erasure of all the information $i$ holds.

With this new formulation, the average \eqref{eq:Stt:System:Average} becomes
\begin{equation}
    \label{eq:Stt:System:Average:reformulation}
    \bar x(t) = \tfrac1N\sum\nolimits_{i=1}^N x_i(t).
\end{equation}
The results we derive with this relaxation are also valid for algorithms that do not make use of it.
Observe this reformulation is not directly feasible if the system size changes.

\subsection{Objective and relaxations}
\label{sec:Stt:Relaxations}

The usual convergence of the estimate $y_i(t)$ of an agent $i$ is not achievable in our open system, due to the time-varying nature of the objective $\bar x(t)$, and to the time it takes for the agents to react to perturbations.
Hence, in order to measure the performance of algorithms that compute such estimate in our setting, we make the natural choice to analyze the Mean Squared Error, defined as follows:
\begin{equation}
    \label{eq:STT:MSE_C(t)}
    C(t) := \frac1N \sum_{i=1}^N \prt{\bar x(t) - y_i(t)}^2.
\end{equation}

Our goal is to derive lower bounds on the expected value of the criterion above which are valid for any algorithm implementable in our setting.

For that purpose, we consider a setting such that agents have access to more information than what is typically allowed in open multi-agent systems: the size of the system $N$, the labels of the agents, the distribution defining their values, and the dynamics of the system, \textit{i.e.}, the stochastic properties of the replacements and information exchanges.
Moreover, we assume the agents have access to a universal time, and that they have unlimited memory, so that they can potentially send everything they know during interactions.
We analyze an algorithm that is provably optimal in that setting, so that its performance is a lower bound on that of any other implementable algorithm.
Notice that our bounds are also valid for any setting allowing for less information to be accessible to the agents, or subject to more constraints.

\section{Knowledge sets}
\label{Sec:Ksets}

\subsection{Definitions}
We now provide a general representation of the set containing all the information that an agent \textit{may have access to} from a sequence of replacements and information exchanges.
We name it \textit{knowledge set}.
In our setting, the information contained in that set can be of two types: it is either related to the value held by an agent at some time instant, or to the occurrence of an information exchange at some time.

\begin{defi}[Knowledge set]
    \label{def:Stt:General_Kset}
    Let $\epsilon$ be a sequence of replacements and information exchanges.
    The \emph{knowledge set} of agent $i$ at time $t$ is the set $\Kset{i}{\epsilon}(t)$ initialized at the arrival of agent $i$ in the system at time $t_0$, either at the beginning of the process or when it is replaced, as
    \begin{equation}
        \Kset{i}{\epsilon}(t_0^+) = \brc{\prt{x_i(t_0),i,t_0}},
    \end{equation}
    and that is updated in the event $\InfoExch ji$ at time $\tilde t$ as
    \begin{equation}
        \Kset{i}{\epsilon}(\tilde t^+) = \Kset{i}{\epsilon}(\tilde t^-) \cup \Kset{j}{\epsilon}(\tilde t^-) \cup \brc{\prt{x_j(\tilde t),j,\tilde t}} \cup \brc{\prt{\InfoExch ji,\tilde t}},
    \end{equation}
    where $\prt{x_j(\tilde t),j,\tilde t}$ and $(\InfoExch ji,\tilde t)$ respectively denote the information that the value of agent $j$ at time $\tilde t$ is $x_j(\tilde t)$ and that the event $\InfoExch ji$ happened at time $\tilde t$.
\end{defi}

The knowledge set thus represents all the information that could be made available to an agent $i$ through a sequence of events.
At its arrival in the system, agent $i$ only knows itself.
Then each time it receives information from an agent $j$, its knowledge set records everything $j$ knows at that time, and keeps track of that information exchange (that is the last term of the union).
Information exchanges could be given a label that would also be recorded in the knowledge set at interactions; this is omitted since it is not used for obtaining our results.

\begin{remark}
    One could use other definitions similar to Definition~\ref{def:Stt:General_Kset} to handle different types of replacements and information exchanges, such as \textit{e.g.}, replacements with memory inheritance where agents do not forget what they know at replacements, or broadcast information exchanges where agents send their information to all other agents instead of only one.
\end{remark}

A graphical representation of the evolution of knowledge sets as presented in Definition~\ref{def:Stt:General_Kset} is provided in Fig.~\ref{fig:stt:KsetExample} for an event sequence constituted of two information exchanges followed by a replacement.
Standard results in computer science guarantee that the result of any (deterministic) algorithm can be computed based solely on knowledge sets as they are defined above, see \textit{e.g.}, \cite{MISC:ViewsInAGraph_Part1,MISC:ViewsInAGraph_Julien}.

\begin{figure}
    \centering
    \includegraphics[width=0.45\textwidth]{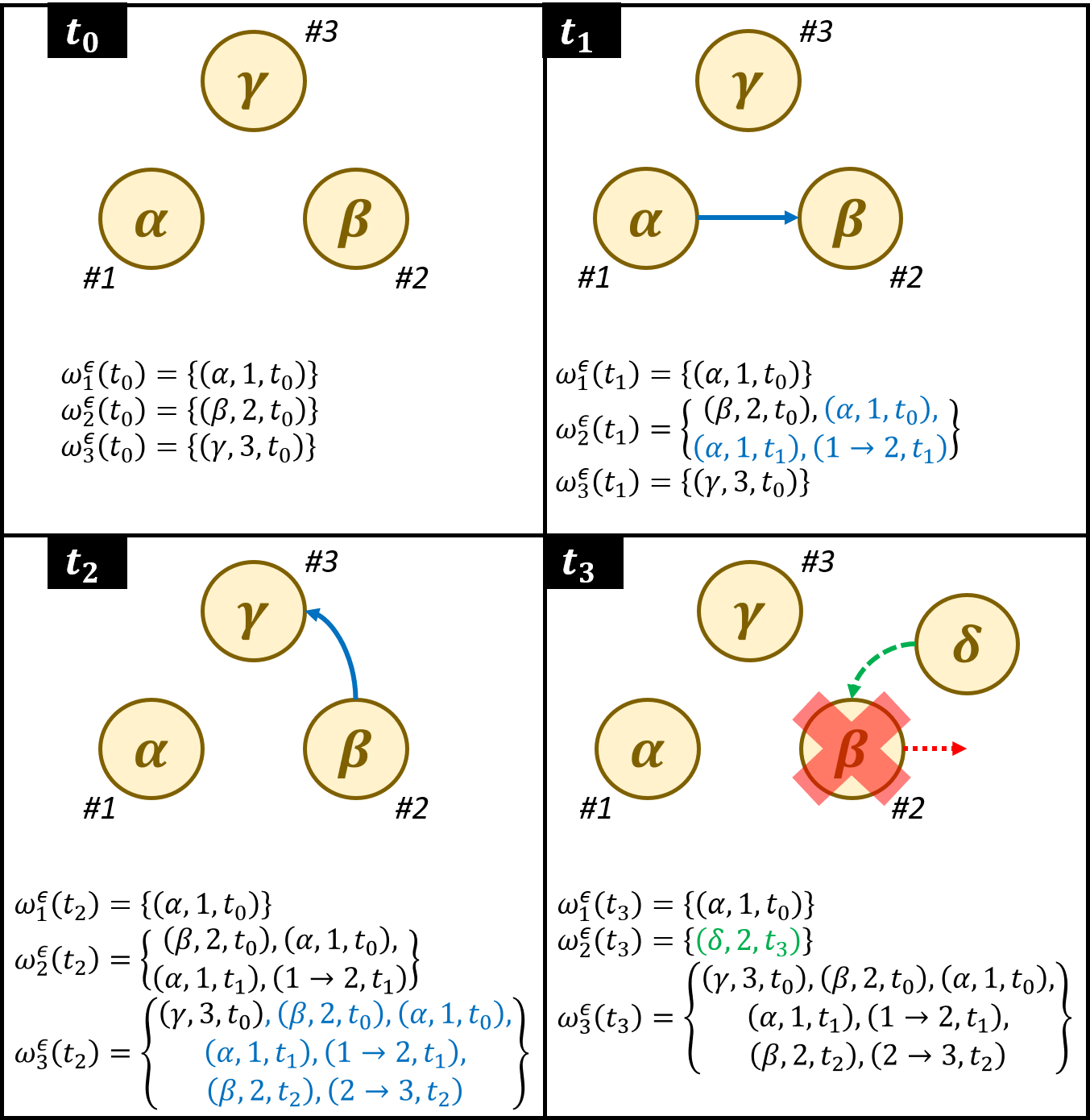}
    \caption{Evolution of the knowledge sets of three agents as defined in Definition~\ref{def:Stt:General_Kset}, in an open system as described in Section~\ref{Sec:Stt}, subject to two consecutive information exchanges (from agent 1 to agent 2, and then from agent 2 to agent 3), and then to a replacement of agent 2. Each box corresponds to the state of the knowledge sets of the agents right after each event, \textit{i.e.}, at initialization, at the interactions and at the replacement.}
    \label{fig:stt:KsetExample}
\end{figure}

The following lemma directly follows Definition~\ref{def:Stt:General_Kset}, and means that anything that can be computed from a given sequence of replacements and information exchanges can also be computed if more exchanges are performed.

\begin{lem}
    \label{lem:Stt:Kset:Kset_Inclusion}
    Let $\epsilon$ and $\epsilon'$ be two sequences of replacements and information exchanges such that $\epsilon\subseteq\epsilon'$ with the same replacement events, then it follows for all $i$ and all $t$ that
    \begin{equation}
        \label{eq:lem:Stt:Kset:Kset_Inclusion}
        \Kset{i}{\epsilon}(t) \subseteq \Kset{i}{\epsilon'}(t),
    \end{equation}
    and hence any algorithm that can be computed on $\Kset{i}{\epsilon}(t)$ can also be computed on $\Kset{i}{\epsilon'}(t)$.
\end{lem}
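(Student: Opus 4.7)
The plan is to prove the inclusion by induction on the chronologically ordered events appearing in the union $\epsilon\cup\epsilon'=\epsilon'$, treating one event at a time and showing the invariant $\Kset{i}{\epsilon}(\tau)\subseteq\Kset{i}{\epsilon'}(\tau)$ is preserved across the event. The second part of the statement (on algorithms being computable from the larger set) will follow immediately from the inclusion together with the standard views-in-a-graph result already cited by the authors in \cite{MISC:ViewsInAGraph_Part1,MISC:ViewsInAGraph_Julien}, so no separate argument is needed there.

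Before starting the induction, I would first handle the base case by using the assumption that $\epsilon$ and $\epsilon'$ share the same replacement events. This guarantees that for each agent $i$, the initialization time $t_0$ of $\Kset{i}{\epsilon}$ and of $\Kset{i}{\epsilon'}$ coincides, and at that instant both knowledge sets equal $\brc{(x_i(t_0),i,t_0)}$, so the inclusion holds with equality at $t_0^+$.

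For the inductive step, I would consider the next event in $\epsilon'$ at some time $\tilde t$ and distinguish three cases. First, if the event is a replacement, it belongs to both sequences, both knowledge sets involved are simultaneously reset to the same singleton $\brc{(x_i'(\tilde t),i,\tilde t)}$, and the invariant is preserved (trivially as equality). Second, if the event is an information exchange $\InfoExch{j}{i}$ present in $\epsilon$ (hence also in $\epsilon'$), the update rule of Definition~\ref{def:Stt:General_Kset} is applied on both sides; using the induction hypothesis $\Kset{i}{\epsilon}(\tilde t^-)\subseteq\Kset{i}{\epsilon'}(\tilde t^-)$ and $\Kset{j}{\epsilon}(\tilde t^-)\subseteq\Kset{j}{\epsilon'}(\tilde t^-)$, and noting that the added singletons $\brc{(x_j(\tilde t),j,\tilde t)}$ and $\brc{(\InfoExch{j}{i},\tilde t)}$ are identical on both sides, the inclusion transfers to time $\tilde t^+$. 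Third, if the event is an information exchange present only in $\epsilon'$ but not in $\epsilon$, then $\Kset{i}{\epsilon}$ is left unchanged while $\Kset{i}{\epsilon'}$ can only grow, so $\Kset{i}{\epsilon}(\tilde t^+)=\Kset{i}{\epsilon}(\tilde t^-)\subseteq\Kset{i}{\epsilon'}(\tilde t^-)\subseteq\Kset{i}{\epsilon'}(\tilde t^+)$; knowledge sets of agents not involved in this event are unaffected on both sides. Since between events the knowledge sets remain constant by definition, the invariant extends to every $t\geq 0$.

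I do not expect any real obstacle: the argument is essentially bookkeeping once the induction is set up, and the only subtlety is to insist that the assumption of identical replacement events ensures synchronized initialization times, which is what makes the singletons $\brc{(x_i(t_0),i,t_0)}$ match on both sides and prevents the two knowledge sets from drifting apart at resets. The conclusion on algorithms is then an immediate corollary.
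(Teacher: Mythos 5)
Your induction over the chronologically ordered events of $\epsilon'$ is correct and is simply a fully spelled-out version of the paper's own one-line argument, which appeals directly to the construction of the knowledge sets to conclude that everything gathered under $\epsilon$ is also gathered under $\epsilon'$. No gap; same approach, just more detail.
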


\begin{proof}
    The proof directly follows the fact that all the information gathered in $\Kset{i}{\epsilon}(t)$ with $\epsilon$ is also obtained in $\Kset{i}{\epsilon'}(t)$ with $\epsilon'$ through the same information exchanges by construction.
\end{proof}

We now define the \emph{age of the most recent information about an agent $j$ that is available to agent $i$}, that will be used to quantitatively assess the relevance of an information in $\Kset{i}{\epsilon}(t)$.

\begin{defi}[Age of the most recent information]
    \label{def:Stt:AgeOfInfo}
    Given a knowledge set $\Kset{i}{\epsilon}(t)$, the \emph{age of the most recent information about agent $j$ available to agent $i$ at time $t$} (if it exists) is defined as
    \begin{equation}
        \label{eq:def:Stt:AgeOfInfo}
        \Tji{\epsilon}(t) := \min\brc{s:\prt{x_j(t-s),j,t-s}\in\Kset{i}{\epsilon}(t)}.
    \end{equation}
\end{defi}

Observe that $\Tji\epsilon(t)$ as defined above is the amount of time spent since the time instant corresponding to the most recent information about $j$ held by $i$, and thus only exists if there is information about agent $j$ in $\Kset{i}{\epsilon}(t)$.

\subsection{Properties for stochastic event sequences}

From now on, we will consider stochastic event sequences $\epsilon^*$, satisfying the following assumption.

\begin{ass}
\label{ass:stochasticEventSeq}
Let $\epsilon^*$ be a sequence of events constituted of replacements and information exchanges.
\begin{enumerate}
    \item Individual replacements follow a Poisson clock of rate $\lambda_r$, and are independent of each other, of information exchanges and of the values held by the agents;
    \item Information exchanges happen according to some stochastic interaction model denoted $*$. They are independent of replacements and of the values held by the agents (but not necessarily of each other).  
\end{enumerate}
\end{ass}

Assumption~\ref{ass:stochasticEventSeq} implies that on average $N\lambda_r$ replacements are expected to take place in the whole system per unit of time.
We use Poisson clocks because they are standard for modelling memoryless events.
Moreover, we will see that the relevance of information held by an agent $i$ about another agent $j$ (\textit{i.e.}, the probability estimated by $i$ that $j$ is still present in the system) is then entirely determined by the age of the most recent information.
The two following lemmas state useful properties of knowledge sets for stochastic event sequences satisfying Assumption~\ref{ass:stochasticEventSeq}, that we will use to derive or results.

\begin{lem}
    \label{lem:Stt:Kset:AgeOfInfo}
    Consider a stochastic event sequence $\epsilon^*$ satisfying Assumption~\ref{ass:stochasticEventSeq}, and let $R_j[s,t]$ denote the event that agent $j$ was replaced between times $s$ and $t$, then for all times $s$ such that $\prt{x_j(s),j,s}\in\Kset{i}{\epsilon^*}(t)$, there holds
    
    \begin{small}
    \begin{equation}
        \label{eq:lem:Stt:Kset:AgeOfInfo}
        \condP{R_j[s,t]}{\Kset{i}{\epsilon^*}(t)}
        = \begin{cases}
            1                   &\hbox{if }R_j[s,t-\Tjie(t)]\\
            1-e^{-\lr\Tjie(t)}  &\hbox{otherwise}
        \end{cases}.
    \end{equation}
    \end{small}
\end{lem}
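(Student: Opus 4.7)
The plan is to decompose the event $\ReplEvent{j}{s}{t}$ using the partition of $[s,t]$ at the time $t - \Tjie(t)$, namely $\ReplEvent{j}{s}{t} = \ReplEvent{j}{s}{t-\Tjie(t)} \cup \ReplEvent{j}{t-\Tjie(t)}{t}$, and to treat the two sub-intervals separately. The first sub-interval is the one on which agent $i$'s information is ``old enough'' to in principle witness a replacement, while on the second sub-interval $i$ has no information at all about $j$, so the Poisson nature of replacements should determine the probability.

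For the first case, where $\ReplEvent{j}{s}{t-\Tjie(t)}$ holds, I would argue that this event is determined almost surely by $\Kset{i}{\epsilon^*}(t)$: both $\prt{x_j(s),j,s}$ and $\prt{x_j(t-\Tjie(t)),j,t-\Tjie(t)}$ belong to the knowledge set, and the two recorded values differ if and only if a replacement of $j$ occurred in between (with probability one, for a continuously distributed initial value). Conditional on this sub-event, $\condP{\ReplEvent{j}{s}{t}}{\Kset{i}{\epsilon^*}(t)} = 1$ is immediate from the inclusion $\ReplEvent{j}{s}{t-\Tjie(t)} \subseteq \ReplEvent{j}{s}{t}$.

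For the second case, the quantity to compute reduces, on the complementary event, to $\condP{\ReplEvent{j}{t-\Tjie(t)}{t}}{\Kset{i}{\epsilon^*}(t)}$. The key step is to show that, conditional on $\Kset{i}{\epsilon^*}(t)$, the restriction of $j$'s replacement Poisson process to $(t-\Tjie(t), t]$ still has rate $\lr$; the desired value $1 - e^{-\lr \Tjie(t)}$ then follows from the standard Poisson formula for the probability of at least one event in an interval of length $\Tjie(t)$.

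The main obstacle I anticipate is precisely the independence argument underpinning this second case. I would establish it by inspecting Definition~\ref{def:Stt:General_Kset}: by an induction on the sequence of information exchanges building $\Kset{i}{\epsilon^*}(t)$, any of its elements can be traced through a chain of exchanges back to a direct interaction involving $j$ at some time $\tau \leq t - \Tjie(t)$, since $\Tjie(t)$ is by definition the smallest age of information about $j$ in the set. Hence $\Kset{i}{\epsilon^*}(t)$ is measurable with respect to the $\sigma$-algebra generated by events involving $j$ before $t-\Tjie(t)$ together with events that do not involve $j$. By Assumption~\ref{ass:stochasticEventSeq}, replacements of $j$ in $(t-\Tjie(t), t]$ are independent of all information exchanges, of all other agents' values, and of all other replacements; and by the Poisson (memorylessness) property, they are independent of $j$'s own replacements before $t-\Tjie(t)$. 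This yields the required conditional independence and hence the claimed formula.
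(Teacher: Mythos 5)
The paper states Lemma~\ref{lem:Stt:Kset:AgeOfInfo} without a formal proof, giving only the informal justification in the paragraph that follows it, and your two-case split at $t-\Tjie(t)$ follows exactly that logic; the issues below are therefore gaps relative to what the lemma requires rather than departures from the paper's route. The first gap is in your treatment of the case $\ReplEvent{j}{s}{t-\Tjie(t)}$: you argue that the knowledge set determines this event because two recorded values of $j$ differ almost surely across a replacement, which requires the value distribution to be atomless. The paper only assumes zero mean and variance $\sigma^2$, so this step fails for, say, Rademacher-distributed values. The mechanism consistent with Definition~\ref{def:Stt:General_Kset} is different: the most recent triple about $j$ is created at an exchange $\InfoExch{j}{k}$ at time $t-\Tjie(t)$, and that exchange forwards the whole set $\Kset{j}{\epsilon^*}$ of $j$ at that instant, which contains the arrival record of the incarnation of $j$ then present; that propagated arrival record, not a value comparison, is what makes $\ReplEvent{j}{s}{t-\Tjie(t)}$ determined by $\Kset{i}{\epsilon^*}(t)$ for any value distribution. (Note in passing that the paper only ever invokes the lemma with $s=t-\Tjie(t)$, so this first case is never actually used.)

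The second gap concerns the conditional-independence step. You condition on ``the $\sigma$-algebra generated by events involving $j$ before $t-\Tjie(t)$,'' but $t-\Tjie(t)$ is itself random, so this object is not well defined as written; worse, the realized knowledge set encodes the \emph{absence} of fresher information about $j$, which could a priori be informative about recent events involving $j$, so measurability with respect to ``old'' events is not enough. The observation that closes the loop is that $\Tjie(t)$ is a deterministic function of the exchange sequence and of the replacements of agents \emph{other than} $j$ only: any chain delivering a triple about $j$ to $i$ may be taken to have $j$ solely as its initial sender, since any intermediate send by $j$ creates a fresher triple, so whether information of a given age reaches $i$ never depends on $j$'s own replacement clock. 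Hence $\Tjie(t)$ is independent of that clock, and conditionally on the exchanges and on the other agents' replacements the interval $\prt{t-\Tjie(t),t}$ is fixed, the knowledge set depends on $j$'s clock only through its restriction to $[0,t-\Tjie(t)]$, and the independent-increments property of the Poisson process gives $1-e^{-\lr\Tjie(t)}$. Your tracing argument gestures at this but does not pin down why the conditioning over a random interval is legitimate, which is precisely the delicate point of the lemma.
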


The above lemma means that in the event $R_j[s,t-\Tjie(t)]$, \textit{i.e.}, if agent $j$ was replaced between times $s$ and $t-\Tjie(t)$, then agent $i$ knows it for sure from the information it acquired at time $t-\Tjie(t)$ (by definition of $\Tjie(t)$).
Hence, in that case, the probability that agent $j$ was replaced between times $s$ and $t$ provided the knowledge of agent $i$ at time $t$, denoted $\condP{R_j[s,t]}{\Kset{i}{\epsilon^*}(t)}$, is $1$.
Otherwise, this probability depends on $\Tjie(t)$, following the Poisson process governing replacements.
Therefore, the relevance of an information about agent $j$ in $\Kset{i}{\epsilon^*}(t)$ is entirely characterized by $\Tjie(t)$, and does not depend on the labels of the agents nor the values they hold.
It thus prevents situations where the replacement probability can be assessed from something else than the age (\textit{e.g.}, if agents were to inform everyone at their arrival), or where information can be neglected or deleted because of the value or label of the agents.
Moreover, it provides a proper expression for the replacement probability.

\begin{lem}
\label{lem:IndepKsetValue}
    Let $t_0$ be an arrival time of agent $i$ in the system (at initialization or through a replacement), then for all times $s\leq t_0$, and for any agent $j$, $x_i(t_0^+)$ and $\Kset{j}{\epsilon^\ast}(s)$ are independent.
\end{lem}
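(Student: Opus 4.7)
The plan is to reduce the claim to the elementary fact that the value freshly drawn for agent $i$ at its arrival is, by construction, independent of anything that happened earlier in the system. I would therefore start by identifying precisely which random objects determine $\Kset{j}{\epsilon^\ast}(s)$, and then invoke Assumption~\ref{ass:stochasticEventSeq} together with the i.i.d.\ value assignment at arrivals.

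First, I would unroll Definition~\ref{def:Stt:General_Kset}: any element of $\Kset{j}{\epsilon^\ast}(s)$ is either (a) an information-exchange token $(\InfoExch kl, \tilde t)$ with $\tilde t\leq s$, or (b) a value token $(x_k(\tilde t),k,\tilde t)$ with $\tilde t \leq s$. In other words, $\Kset{j}{\epsilon^\ast}(s)$ is a measurable function of the restriction $\epsilon^\ast|_{(-\infty,s]}$ of the event sequence and of the collection of values $\{x_k(\tau) : \tau\leq s\}$ carried by the agents up to time $s$. This step is purely bookkeeping from the inductive construction in Definition~\ref{def:Stt:General_Kset}, and is where I would be most careful: I want to make sure no hidden dependence on a post-$s$ quantity slips in, in particular nothing pertaining to the draw that will take place at $t_0$.

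Next, I would combine two independence facts guaranteed by the model. By Assumption~\ref{ass:stochasticEventSeq}, the event sequence $\epsilon^\ast$ (replacements and information exchanges) is independent of all the values held by the agents. By the problem setup in Section~\ref{sec:Stt:System}, the value $x_i(t_0^+)$ attached to agent $i$ at its arrival at time $t_0$ is drawn i.i.d.\ from the prescribed distribution, independently of every previously drawn value and of the replacement/exchange process itself. Hence $x_i(t_0^+)$ is independent of $\epsilon^\ast$ and of $\{x_k(\tau) : \tau \leq s\}$ jointly, since $s\leq t_0$ means all those values were drawn strictly before (or at, but not including) the arrival event that produced $x_i(t_0^+)$.

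Finally, I would conclude by the standard fact that a function of random variables independent of $x_i(t_0^+)$ is itself independent of $x_i(t_0^+)$: applied to the measurable function from the first step, this yields $\Kset{j}{\epsilon^\ast}(s) \indep x_i(t_0^+)$, which is the claim. The only subtle point is the boundary case $s=t_0$: by the convention on $\alpha(t^-)$ vs.\ $\alpha(t^+)$ recalled in Section~\ref{Sec:Notations}, $\Kset{j}{\epsilon^\ast}(t_0)$ is evaluated before the value-drawing event at $t_0^+$, so it involves only pre-arrival randomness and the argument still applies.
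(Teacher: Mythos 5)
Your proof is correct. Note that the paper does not actually supply a formal proof of this lemma: it is stated and then justified only by the informal remark that the new value of a replaced agent is independent of any prior information in the system. Your argument formalizes exactly that intended reasoning --- writing $\Kset{j}{\epsilon^\ast}(s)$ as a measurable function of the event sequence up to time $s$ and of the values drawn up to time $s$, then invoking Assumption~\ref{ass:stochasticEventSeq} (events independent of values) together with the i.i.d.\ draw of $x_i(t_0^+)$ to conclude joint independence, and finally using that a function of variables independent of $x_i(t_0^+)$ is itself independent of $x_i(t_0^+)$. Your handling of the boundary case $s=t_0$ via the $t^-$/$t^+$ convention is also consistent with the paper's notation, so the proposal fills the gap the paper leaves implicit.
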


Lemma~\ref{lem:IndepKsetValue} means that at the moment agent $i$ is replaced, its new value is independent of any prior information in the system, and it is impossible for other agents to estimate that value.
In particular, no information about that value can be deduced from previous values, interaction times, or even from the absence of such information.

\section{General lower bound}
\label{Sec:General}
In this section, we obtain a generic lower bound on the expected Mean Squared Error given by $\Ep{C(t)}$, that is valid for any algorithm implementable with a given interaction model governing the information exchanges, denoted $*$.

Consider a stochastic event sequence $\epsilon^*$, where the information exchanges are obtained from the model $*$.
Then $\Kset{i}{\epsilon^*}(t)$, the knowledge set of agent $i$ at time $t$ obtained from $\epsilon^*$, is also stochastic.
Moreover, $\Tjie(t)$, the age of the most recent information about an agent $j$ in $\Ksetie(t)$ as defined in Definition~\ref{def:Stt:AgeOfInfo}, is a random variable that is well defined only if there exists information about agent $j$ in $\Ksetie(t)$.
Hence, we generalize the notion of \emph{Cumulative Density Function (CDF)} \cite{MISC:Statistics}, and we call the \quotes{\emph{pseudo-CDF}} of $\Tjie(t)$ the function
\begin{equation}
    \label{eq:General:PseudoCDF}
    \Fjit*(s) = P\brk{\Tjie(t)\leq s}
\end{equation}
that allows $\lim_{s\to\infty}\Fjit*(s)\neq1$ if there exists no information about agent $j$ in $\Ksetie(t)$ (denote that event $\NoInfo{j}{i}(t)$).
We similarly call \quotes{\emph{pseudo-PDF}} of $\Tjie(t)$ the function 
\begin{equation}
    \label{eq:General:PseudoPDF}
    \fjit*(s) = \frac{d}{ds}\Fjit*(s).
\end{equation}
The definition of pseudo-PDF is actually linked with that of \quotes{conditional probability distributions} \cite{MISC:Statistics}.
Denote (with an abuse of notation) $F_{T|\bar N}(s) = \condP{\Tjie(t)\leq s}{\ExistInfo{j}{i}(t)}$.
This function is a CDF, and there holds $\Fjit*(s) = F_{T|\bar N}(s)P(\ExistInfo{j}{i}(t))$.
Hence, $\fjit*(s)$ exists if the pdf $\frac{d}{ds}F_{T|\bar N}(s)$ exists, which is guaranteed as long as $F_{T|N}(s)$, and thus $\Fjit*(s)$, is (absolutely) continuous, which we assume to be the case in our work.

\begin{remark}
    \label{rem:pseudoPDF}
    The pseudo-PDF $\fjit*(s)$ captures the way information travels in the system.
    In particular, it reflects the properties of the interaction model (\textit{e.g.}, broadcast to all neighbours, pairwise interactions etc.) and any restriction on the communication, such as the network topology.
\end{remark}

We now provide the generic lower bound on $\Ep{C(t)}$, that is valid for any interaction model $*$ in the following theorem.

\begin{thm}
\label{thm:General:bound}
For any algorithm that can be computed on a knowledge set obtained from an interaction model $*$ under Assumption~\ref{ass:stochasticEventSeq}, there holds
\begin{equation}
    \label{eq:thm:General:bound}
    \Ep{C(t)} 
    \geq \frac{1}{N^3} \sum_{i=1}^N \sum_{j=1}^N \prt{1-\int_0^t \fjit{*}(s)e^{-2\lr s}\ \mathrm ds} \sigma^2,
\end{equation}
where $C(t)$ refers to the MSE defined in equation (\ref{eq:STT:MSE_C(t)}), and where $\fjit*(s)$ is the pseudo-PDF as defined in \eqref{eq:General:PseudoPDF} for $*$.
This holds even if agents know the system size $N$, the replacement rate $\lr$, the distribution defining their values, and $\fjit*(s)$.
\end{thm}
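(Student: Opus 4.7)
The plan is to prove that the minimum-MSE estimator based on $\Ksetie(t)$ attains the bound; since any implementable algorithm does no better, this yields a valid lower bound. By the standard computability results cited after Definition~\ref{def:Stt:General_Kset}, any deterministic estimate $y_i(t)$ that agent $i$ can produce is a function of $\Ksetie(t)$, and since the conditional expectation $\E[\bar x(t)\mid\Ksetie(t)]$ minimizes MSE among such functions, I obtain
\begin{equation*}
\E\brk{(\bar x(t) - y_i(t))^2}
\geq \E\brk{\mathrm{Var}\bigprt{\bar x(t)\mid \Ksetie(t)}}.
\end{equation*}
Averaging over $i$ then gives $\E[C(t)] \geq \frac{1}{N}\sum_i \E[\mathrm{Var}(\bar x(t)\mid\Ksetie(t))]$; the same bound transfers to randomized algorithms by further conditioning on the private randomness.

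The second step is to decompose this variance into per-agent contributions. Using that replacement Poisson clocks are mutually independent and that newly-arrived agents receive fresh i.i.d.\ values (Assumption~\ref{ass:stochasticEventSeq} and Lemma~\ref{lem:IndepKsetValue}), the variables $(x_j(t))_{j=1}^N$ are conditionally independent given $\Ksetie(t)$, so the cross-covariances vanish and
\begin{equation*}
\mathrm{Var}(\bar x(t)\mid\Ksetie(t)) = \frac{1}{N^2}\sum_{j=1}^N \mathrm{Var}(x_j(t)\mid\Ksetie(t)).
\end{equation*}

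It remains to evaluate each per-agent term. On $\NoInfo{j}{i}(t)$, Lemma~\ref{lem:IndepKsetValue} makes $x_j(t)$ independent of $\Ksetie(t)$, so the conditional variance is $\sigma^2$. On $\ExistInfo{j}{i}(t)$, writing $T=\Tjie(t)$, Lemma~\ref{lem:Stt:Kset:AgeOfInfo} yields $\condP{\ReplEvent{j}{t-T}{t}}{\Ksetie(t)}=1-e^{-\lr T}$, and conditioning on whether a replacement occurred in $[t-T,t]$ shows that, given $\Ksetie(t)$, $x_j(t)$ equals the known value $x_j(t-T)$ with probability $e^{-\lr T}$ and is otherwise an independent draw of mean $0$ and variance $\sigma^2$. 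A direct second-moment computation on this mixture, followed by taking expectation over $x_j(t-T)$ (whose squared value has expectation $\sigma^2$), collapses to $(1-e^{-2\lr T})\sigma^2$. Integrating against the pseudo-PDF $\fjit{*}(s)$ on $[0,t]$ and adding the no-info mass $1-\int_0^t\fjit{*}(s)ds$ weighted by $\sigma^2$ yields, after cancellation, $\bigprt{1-\int_0^t\fjit{*}(s)e^{-2\lr s}ds}\sigma^2$, which substituted into the earlier decomposition gives exactly \eqref{eq:thm:General:bound}.

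The main obstacle is the conditional-independence argument in the second step: although intuitive from the independence of replacement clocks and of agent values, one must carefully rule out hidden couplings induced by the shared history of information exchanges encoded in $\Ksetie(t)$. I would handle this by conditioning sequentially along the trajectory of replacement and exchange events and invoking Lemma~\ref{lem:IndepKsetValue} at every arrival. A secondary subtlety is the mixed structure of the law of $\Tjie(t)$---an atom on $\NoInfo{j}{i}(t)$ and a continuous part on $\ExistInfo{j}{i}(t)$---which is exactly what the pseudo-CDF/PDF formalism introduced just before the theorem is designed to accommodate cleanly.
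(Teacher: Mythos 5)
Your proposal is correct and follows essentially the same route as the paper: optimality of the conditional expectation $\condE{\bar x(t)}{\Ksetie(t)}$ (Proposition~\ref{prop:General:opti_algo}), decomposition of the error into per-agent terms with vanishing cross terms (Proposition~\ref{prop:General:decomp}), and the per-agent mixture computation via Lemmas~\ref{lem:Stt:Kset:AgeOfInfo}, \ref{lem:IndepKsetValue} and the $(1-p^2)\sigma^2$ identity integrated against the pseudo-PDF (Proposition~\ref{prop:General:single_error}). The only cosmetic difference is that you phrase steps one and two in terms of conditional variance and conditional independence where the paper explicitly writes the optimal estimator and argues absence of correlation; the subtlety you flag about cross terms is real but is resolved exactly as you suggest.
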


The result above is also valid for any instantiation of $\fjit*(s)$ with another pseudo-PDF that bounds it (\textit{i.e.}, such that it defines a random variable smaller than $\Tjie(t)$ in the usual stochastic order \cite{MISC:UsualStochasticOrder}).
See the preliminary version of this work \cite{OMAS:CDC2019:FPL_intrAVG} for an example of application of this.

Corollary~\ref{cor:General:symmetric_steadyState} particularizes equation (\ref{eq:thm:General:bound}) for interaction models that do not make any distinctions between the agents, and where we assume the system has been running for a long time.

\begin{cor}
    \label{cor:General:symmetric_steadyState}
    If $\fjit*(s) = f^*(s)$ for some $f^*(s)$ holds $\forall i,j$ and for all time $t$, then there holds
    \begin{equation}
        \label{eq:cor:General:symmetric_steadyState:symmetry}
        \Ep{C(t)} \geq \frac{N-1}{N^2} \prt{1-\int_0^t f^*(s)\ e^{-2\lambda_rs}\mathrm ds} \sigma^2.
    \end{equation}
    In addition, when $t\rightarrow \infty$, there holds
    \begin{equation}
        \label{eq:cor:General:symmetric_steadyState:symmetry_steadyState}
        \lim\inf_{t\rightarrow\infty} \Ep{C(t)} \geq \frac{N-1}{N^2} \prt{1-\int_0^\infty f^*(s)\ e^{-2\lambda_rs}\mathrm ds} \sigma^2.
    \end{equation}
\end{cor}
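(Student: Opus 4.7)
My plan is to derive both inequalities of Corollary~\ref{cor:General:symmetric_steadyState} directly from Theorem~\ref{thm:General:bound} using the symmetry assumption. For the finite-time bound~\eqref{eq:cor:General:symmetric_steadyState:symmetry}, I split the double sum in~\eqref{eq:thm:General:bound} into the diagonal part ($i = j$) and the off-diagonal part ($i \neq j$). The diagonal terms vanish because each agent $i$ retains perfect knowledge of its own current value $x_i(t)$: its knowledge set was initialized at its last replacement time with the tuple $(x_i,i,t_0)$, and since values remain constant between replacements, an optimal estimator reproduces $x_i(t)$ exactly from that stored information. Equivalently, the pseudo-PDF $f_{i\to i}^{t,*}$ concentrates at $s=0$, so $\int_0^t f_{i\to i}^{t,*}(s)\ e^{-2\lr s}\ \mathrm ds = 1$. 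The $N(N-1)$ off-diagonal terms, by the symmetry assumption $\fjit*(s)=f^*(s)$, are all equal to $\prt{1 - \int_0^t f^*(s)\ e^{-2\lr s}\ \mathrm ds}\sigma^2$; dividing their sum by $N^3$ produces the prefactor $(N-1)/N^2$ and the first inequality.

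For the asymptotic bound~\eqref{eq:cor:General:symmetric_steadyState:symmetry_steadyState}, I take the $\liminf$ as $t \to \infty$ of both sides of~\eqref{eq:cor:General:symmetric_steadyState:symmetry}. The only $t$-dependent quantity on the right-hand side is the integral $I(t) := \int_0^t f^*(s)\ e^{-2\lr s}\ \mathrm ds$. Since $f^*(s)\ e^{-2\lr s} \geq 0$, the function $I(t)$ is monotonically non-decreasing in $t$; and since $f^*$ is a pseudo-PDF satisfying $\int_0^\infty f^*(s)\ \mathrm ds \leq 1$, $I(t)$ is bounded above by $1$. Hence $I(t)$ converges to $\int_0^\infty f^*(s)\ e^{-2\lr s}\ \mathrm ds$ by monotone convergence, and passing to the limit in~\eqref{eq:cor:General:symmetric_steadyState:symmetry} yields the claimed asymptotic lower bound.

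The main subtle point is justifying that the diagonal terms vanish under the stated symmetry. This hinges on the fact that the knowledge-set framework of Section~\ref{Sec:Ksets} ensures each agent has access to its own current value at all times, which translates into a zero contribution to the MSE bound regardless of the interaction model. Once this point is in hand, the remainder of the argument is algebraic simplification of the double sum followed by a standard monotone-limit step.
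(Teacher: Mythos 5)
Your proof is correct and follows the route the paper intends (the corollary is stated as an immediate particularization of Theorem~\ref{thm:General:bound} with no separate proof): discard the diagonal terms of the double sum, apply the symmetry hypothesis to the $N(N-1)$ off-diagonal terms to get the prefactor $\tfrac{N-1}{N^2}$, and pass to the limit using monotone convergence of $\int_0^t f^*(s)e^{-2\lambda_r s}\,\mathrm ds$. One small imprecision: by Definition~\ref{def:Stt:AgeOfInfo}, $T_{i\to i}^{\epsilon^*}(t)$ equals the time elapsed since agent $i$'s last replacement, not $0$, so the diagonal pseudo-PDF does not literally concentrate at $s=0$ (the zero diagonal error instead comes from the agent knowing it has not been replaced since its own initialization); this is immaterial, however, because each diagonal summand $1-\int_0^t f_{i\to i}^{t,*}(s)e^{-2\lambda_r s}\,\mathrm ds$ is nonnegative in any case, so dropping it only weakens the lower bound and the stated inequality follows.
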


The use of $\lim\inf_{t\rightarrow\infty}$ in Corollary~\ref{cor:General:symmetric_steadyState} is necessary because the limit might not exist as $t\to\infty$.

The remainder of this section provides intermediate results that will allow us to build step by step the proof of Theorem~\ref{thm:General:bound}, that is formally provided at the end of this section.

\subsection{Optimal estimate description}
\label{sec:General:opti_algo}
We begin by showing that an optimal estimate is obtained by choosing the expected value of the average provided the information in the knowledge set.
Hence, the performance of any algorithm implementable on this knowledge set is by definition lower bounded by that of this estimate. 
An algorithm computing this estimate would achieve optimal performance, but may not be implementable with reasonable resources.

\begin{prop}
\label{prop:General:opti_algo}
    Let $\Ksetie(t)$ be the knowledge set obtained from an interaction model $*$, and let us define
    \begin{equation}
        \label{eq:prop:General:opti_algo}
        y_i^*(t) := \condE{\bar x(t)}{\Ksetie(t)}.
    \end{equation}
    Then, for any algorithm implementable on a knowledge set $\Ksetie(t)$ obtained from the interaction model $*$, there holds
    \begin{equation}
        \label{eq:prop:General:optimality}
        \Ep{C(t)} \geq \Ep{\prt{\bar x(t)-y_i^*(t)}^2}.
    \end{equation}
\end{prop}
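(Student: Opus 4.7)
The proof plan is to invoke the classical minimum mean squared error (MMSE) property of conditional expectation: for any square-integrable random variable $X$ and any $\sigma$-algebra $\mathcal{F}$, the minimizer of $\Ep{\prt{X - Z}^2}$ over all $\mathcal{F}$-measurable $Z$ is $Z = \condE{X}{\mathcal{F}}$. Applied to $X = \bar x(t)$ and to the $\sigma$-algebra generated by $\Ksetie(t)$, this immediately singles out $y_i^*(t)$ as the pointwise best predictor of the current average that agent $i$ can construct from its knowledge set, and the stated inequality should drop out.

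The first step is to argue that any estimate $y_i(t)$ produced by an algorithm implementable by agent $i$ at time $t$ is a measurable function of $\Ksetie(t)$, possibly augmented by independent random bits. For deterministic algorithms this is already asserted in the excerpt via the standard results on views. For randomized algorithms, replacing $y_i(t)$ by its conditional mean $\condE{y_i(t)}{\Ksetie(t)}$ can only decrease the expected squared error by Jensen's inequality, so there is no loss in assuming $y_i(t) = g(\Ksetie(t))$ for some measurable $g$. The fact that outside randomness cannot help is ultimately backed by Lemma~\ref{lem:IndepKsetValue}, which rules out any information leakage through values not yet observed.

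The second step is the orthogonality decomposition. Writing
\[
    \bar x(t) - y_i(t) = \prt{\bar x(t) - y_i^*(t)} + \prt{y_i^*(t) - y_i(t)},
\]
squaring, and taking conditional expectation given $\Ksetie(t)$, I use that $y_i^*(t) - y_i(t)$ is $\Ksetie(t)$-measurable while $\condE{\bar x(t) - y_i^*(t)}{\Ksetie(t)} = 0$ by the very definition of $y_i^*(t)$, killing the cross term. This gives
\[
    \condE{\prt{\bar x(t) - y_i(t)}^2}{\Ksetie(t)} = \condE{\prt{\bar x(t) - y_i^*(t)}^2}{\Ksetie(t)} + \prt{y_i^*(t) - y_i(t)}^2,
\]
which is $\geq \condE{\prt{\bar x(t) - y_i^*(t)}^2}{\Ksetie(t)}$. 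Taking total expectation yields the agent-wise bound $\Ep{\prt{\bar x(t) - y_i(t)}^2} \geq \Ep{\prt{\bar x(t) - y_i^*(t)}^2}$, from which averaging over $i$ in the definition \eqref{eq:STT:MSE_C(t)} of $C(t)$ produces \eqref{eq:prop:General:optimality}.

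The main (minor) obstacle lies in step one: one must be confident that the output of any algorithm truly factors through $\Ksetie(t)$ and not through some side channel that happens to correlate with $\bar x(t) - y_i^*(t)$. This is precisely where the excerpt's framing of knowledge sets as capturing \emph{all} information potentially accessible to agent $i$, combined with Lemma~\ref{lem:IndepKsetValue} to exclude leakage through the unobserved agent values, is decisive. Once this measurability step is granted, the proof reduces to a textbook application of the MMSE property and the orthogonality principle.
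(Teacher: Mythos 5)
Your proposal is correct and follows essentially the same route as the paper: the paper's proof likewise reduces to the MMSE/argmin property of the conditional expectation given $\Ksetie(t)$ (treating any algorithm's output as deterministic conditional on the knowledge set) followed by the tower property, with the orthogonality decomposition and the handling of randomized outputs via Jensen being details the paper leaves implicit. No gap to report.
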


Therefore, we refer to $y_i^*(t)$ as the \quotes{\emph{optimal estimate}}.

\begin{proof}
    Using the fact that any estimate $y$ is deterministic conditional to $\Ksetie(t)$, one shows that
    \begin{equation*}
        y_i^*(t) = \arg\min_{y} \brc{\condE{\prt{\bar x(t)-y}^2}{\Ksetie(t)}},
    \end{equation*}
    Hence, the result of any other algorithm implementable on $\Ksetie(t)$ satisfies by definition
    \begin{equation*}
        \condE{C(t)}{\Ksetie(t)} \geq \condE{\prt{\bar x(t) - y_i^*(t)}^2}{\Ksetie(t)}.
    \end{equation*}
    The relation above is true for any realization of $\Ksetie(t)$ obtained with $*$.
    Therefore, for any algorithm, there holds
    \vspace{-0.3cm}
    
    \begin{small}
    \begin{align*}
        \Ep{C(t)} &= \Ep{\condE{C(t)}{\Ksetie(t)}}\\
        &\geq \Ep{\condE{\prt{\bar x(t) - y_i^*(t)}^2}{\Ksetie(t)}} = \Ep{\prt{\bar x(t) - y_i^*(t)}^2}
    \end{align*}
    \end{small}
    which concludes the proof.
\end{proof}

\subsection{Error decomposition into independent terms}
\label{sec:General:decomp}

The next proposition builds on the description of the optimal estimate in Proposition~\ref{prop:General:opti_algo} to highlight a convenient property of the system, which allows reducing the analysis of $\Ep{C(t)}$ to that of the MSE of estimation of a single agent's value.

\begin{prop}
    \label{prop:General:decomp}
    Let $\Ksetie(t)$ be the knowledge set obtained from an interaction model $*$, then with the optimal estimate \eqref{eq:prop:General:opti_algo} the expected MSE reduces to
    \begin{equation}
        \label{eq:prop:General:decomp}
        \Ep{C(t)} 
        = \frac{1}{N^3}\sum_{i=1}^N \sum_{j=1}^N \Ep{\prt{x_j(t) - \condE{x_j(t)}{\Ksetie(t)}}^2}.
    \end{equation}
\end{prop}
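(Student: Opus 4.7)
The plan is to exploit the explicit form of the optimal estimate from Proposition~\ref{prop:General:opti_algo}, expand the squared error, and reduce it to a sum of per-agent estimation errors by showing that all cross terms vanish through a conditional independence argument. By linearity of conditional expectation, I would first write
\begin{equation*}
y_i^*(t) = \frac{1}{N}\sum_{j=1}^N \condE{x_j(t)}{\Ksetie(t)},
\end{equation*}
so that, setting $Z_{ij} := x_j(t) - \condE{x_j(t)}{\Ksetie(t)}$, one gets $\bar x(t) - y_i^*(t) = \frac{1}{N}\sum_j Z_{ij}$. Expanding the square then yields $(\bar x(t) - y_i^*(t))^2 = \frac{1}{N^2} \sum_{j,k} Z_{ij} Z_{ik}$. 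Taking expectation and separating diagonal ($j=k$) from off-diagonal ($j \neq k$) terms, the diagonal contributions already match the summands of the target identity; the $1/N^3$ prefactor arises from the $1/N$ in the definition of $C(t)$ combined with the $1/N^2$ from the squared sum.

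The crux is therefore to prove $\Ep{Z_{ij} Z_{ik}} = 0$ whenever $j \neq k$. I would condition on $\Ksetie(t)$: since both $\condE{x_j(t)}{\Ksetie(t)}$ and $\condE{x_k(t)}{\Ksetie(t)}$ are measurable with respect to the knowledge set, the inner conditional expectation $\condE{Z_{ij} Z_{ik}}{\Ksetie(t)}$ equals the conditional covariance of $x_j(t)$ and $x_k(t)$, and the outer expectation then vanishes as soon as this conditional covariance is almost surely zero.

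The main obstacle is to justify this conditional uncorrelation despite the potentially rich content of $\Ksetie(t)$, which may record value snapshots, interaction events, and their timings for arbitrary agents. My argument is that, given $\Ksetie(t)$, the only remaining randomness in $x_j(t)$ comes from the Poisson replacement clock of agent $j$ running past its last recorded instant in the knowledge set, together with the i.i.d.\ values drawn at subsequent replacements; the symmetric statement holds for agent $k$. By Assumption~\ref{ass:stochasticEventSeq}, the replacement clocks of different agents are mutually independent and independent of information exchanges and of values; and by Lemma~\ref{lem:IndepKsetValue}, each newly attributed value is independent of all prior information in the system. Combining these facts yields the conditional independence of $x_j(t)$ and $x_k(t)$ given $\Ksetie(t)$, which kills the cross terms. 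Averaging the surviving diagonal contributions over $i$ then produces exactly the identity~\eqref{eq:prop:General:decomp}.
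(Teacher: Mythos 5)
Your proposal is correct and follows essentially the same route as the paper: expand $\bigl(\bar x(t)-y_i^*(t)\bigr)^2$ via linearity of conditional expectation and kill the cross terms, which the paper dispatches with the one-line remark that the agents are uncorrelated. Your conditional-independence justification of that step (via the independent replacement clocks of Assumption~\ref{ass:stochasticEventSeq} and Lemma~\ref{lem:IndepKsetValue}) is a more careful version of the same argument, not a different one.
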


\begin{proof}
    The estimate (\ref{eq:prop:General:opti_algo}) can be rewritten as follows
    $$\yi{*}(t) = \condE{\bar x(t)}{\Ksetie(t)} = \frac{1}{N} \sum_{j=1}^N\condE{x_j(t)}{\Ksetie(t)}.$$
    \noindent Hence, it follows that
    \begin{align*}
        \Ep{C(t)} 
        &= \Ep{\frac1N\sum_{i=1}^N \prt{\bar{x}(t)-y_i^*(t)}^2}\\
        &= \frac1N\sum_{i=1}^N \prt{ 
            \frac{1}{N^2} \Ep{\sum_{j=1}^N \prt{x_j(t)-\condE{x_j(t)}{\Ksetie(t)}}}^2
            }.
    \end{align*}
    \noindent From the absence of correlation between the agents, the crossed-product terms of the sum above (\textit{i.e.}, for $i\neq j$) are nullified, and by denoting $\hat{x}_i^{(j)}(t) = \condE{x_j(t)}{\Ksetie(t)}$ there holds
    \begin{align*}
        \sum_{i=1}^N\Ep{\sum_{j=1}^N \prt{x_j(t)-\hat{x}_i^{(j)}(t)}}^2
        = \sum_{i=1}^N\Ep{\sum_{j=1}^N \prt{x_j(t)-\hat{x}_i^{(j)}(t)}^2},
    \end{align*}
    and the conclusion directly follows.
\end{proof}

\subsection{Single error analysis}
\label{sec:General:single_error}

From Proposition~\ref{prop:General:decomp}, we only need to analyze the MSE of estimation of a single agent's value to characterize $\Ep{C(t)}$.
Let us define the following criterion:
\begin{equation}
    \label{eq:General:single_error:singleMSE}
    \Cji{*}(t) := \prt{x_j(t)-\condE{x_j(t)}{\Ksetie(t)}}^2,
\end{equation}
where the dependence of $\Cji*(t)$ on $x_j(t)$ is omitted.
The expected value of $\Cji*(t)$ is the MSE we need to analyze.
We will need the following lemma, proved in Appendix~\ref{sec:Annex:proof_Lemma_MSE_RV}.

\begin{lem}
    \label{lem:General:single_error:MSE_RV}
    Let $Y,Z$ be two i.i.d. zero mean random variables of variance $\sigma^2$, and define another random variable $X$ as follows
    \begin{equation}
        \label{eq:lem:General:single_error:MSE_RV:rv_def}
        X := 
        \begin{cases}
            Z  &\hbox{with probability } p\\
            Y       &\hbox{with probability } 1-p
        \end{cases},
    \end{equation}
    such that the event related to the probability $p$ is independent of $Z$ and $Y$.
    The estimator $\hat X$ that minimizes $\Ep{\prt{X-\hat X}^2}$ provided the value of $Z$ is given by
    \begin{equation}
        \label{eq:lem:General:single_error:MSE_RV:estimate}
        \hat X
        = \arg\min_{x} \brc{\condE{\prt{X-x}^2}{Z}} = pZ,
    \end{equation}
    and the error is then given by
    \begin{equation}
        \label{eq:lem:General:single_error:MSE_RV:rv_error}
        \Ep{\prt{X - \hat X}^2} = (1-p^2)\sigma^2.
    \end{equation}
\end{lem}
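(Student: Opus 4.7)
The plan is to exploit the standard fact that the minimum mean squared error estimator of $X$ given $Z$ is the conditional expectation $\E[X|Z]$, so the content of the lemma reduces to two explicit computations.

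First I would identify $\hat X$. By the orthogonality principle for MMSE estimation, the minimizer of $\condE{(X-x)^2}{Z}$ over $Z$-measurable $x$ is $\hat X = \E[X \mid Z]$. To evaluate this I condition on the Bernoulli-like event defining $X$ (call it $B$, with $P(B)=p$), which by assumption is independent of both $Y$ and $Z$. Then
\begin{equation*}
\E[X \mid Z] = p\,\E[Z \mid Z] + (1-p)\,\E[Y \mid Z] = p Z + (1-p)\cdot 0 = pZ,
\end{equation*}
using independence of $Y$ from $Z$ together with $\E[Y]=0$. This yields \eqref{eq:lem:General:single_error:MSE_RV:estimate}.

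For the error, I would plug $\hat X = pZ$ into $\E[(X-\hat X)^2]$ and split according to the event $B$:
\begin{align*}
\E\!\left[(X-pZ)^2\right]
&= p\,\E\!\left[(Z-pZ)^2\right] + (1-p)\,\E\!\left[(Y-pZ)^2\right]\\
&= p(1-p)^2 \sigma^2 + (1-p)\bigl(\sigma^2 - 2p\,\E[YZ] + p^2 \sigma^2\bigr),
\end{align*}
where $\E[Z^2]=\E[Y^2]=\sigma^2$ and $\E[YZ]=0$ by independence and zero mean. Simple algebra turns $p(1-p)^2 + (1-p)(1+p^2)$ into $(1-p)(1+p) = 1-p^2$, giving \eqref{eq:lem:General:single_error:MSE_RV:rv_error}.

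I do not anticipate any real obstacle: the only subtlety is keeping track of the three independences (the event $B$ is independent of both $Y$ and $Z$, and $Y\indep Z$), which is what allows the conditional expectations and the cross term $\E[YZ]$ to collapse cleanly. Everything else is routine manipulation of second moments.
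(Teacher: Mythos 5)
Your proof is correct and follows essentially the same route as the paper's: both decompose over the selection event, use the independence assumptions and zero means to kill the cross terms, and arrive at $\hat X = pZ$ and the error $(1-p^2)\sigma^2$ by routine second-moment algebra. The only cosmetic difference is that you invoke the standard fact that the conditional expectation minimizes the conditional MSE, whereas the paper obtains the same minimizer by explicitly differentiating the conditional quadratic in $\hat X$.
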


The following proposition provides an expression for the expected value of $\Cji*(t)$.

\begin{prop}
    \label{prop:General:single_error}
    Let $\Ksetie(t)$ be the knowledge set obtained from an interaction model $*$, then there holds
    \begin{equation}
        \label{eq:prop:General:single_error}
        \Ep{\Cji{*}(t)} = \prt{1 - \int_0^t \fjit{*}(s)e^{-2\lr s}\mathrm{d}s}\sigma^2 
    \end{equation}
    where $\fjit{*}(s)$ is pseudo-PDF of $\Tjie(t)$ as defined in \eqref{eq:General:PseudoPDF}.
\end{prop}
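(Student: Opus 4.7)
The plan is to reduce the computation of $\Ep{\Cji{*}(t)}$ to a direct application of Lemma~\ref{lem:General:single_error:MSE_RV}, by conditioning carefully on the information that the knowledge set $\Ksetie(t)$ provides about agent $j$. The key insight is that, thanks to Lemma~\ref{lem:Stt:Kset:AgeOfInfo} and Lemma~\ref{lem:IndepKsetValue}, the only relevant piece of information for estimating $x_j(t)$ is the most recent value $x_j(t-\Tjie(t))$ that $i$ has heard of, together with the age $\Tjie(t)$ itself.

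First, I would split the expectation according to whether $\Ksetie(t)$ contains any information about agent $j$:
\begin{equation*}
\Ep{\Cji{*}(t)} = P(\NoInfo{j}{i}(t))\,\condE{\Cji{*}(t)}{\NoInfo{j}{i}(t)} + P(\ExistInfo{j}{i}(t))\,\condE{\Cji{*}(t)}{\ExistInfo{j}{i}(t)}.
\end{equation*}
In the event $\NoInfo{j}{i}(t)$, the knowledge set is independent of $x_j(t)$, so the optimal estimate is $\condE{x_j(t)}{\Ksetie(t)} = 0$ and the conditional error is simply $\sigma^2$.

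Next, I would analyse the event $\ExistInfo{j}{i}(t)$ by further conditioning on the age $\Tjie(t)=s$. Given that $\Ksetie(t)$ contains $(x_j(t-s),j,t-s)$, write $x_j(t)$ in the form required by Lemma~\ref{lem:General:single_error:MSE_RV}: with probability $e^{-\lr s}$ (from Lemma~\ref{lem:Stt:Kset:AgeOfInfo}, conditional on the age of the info and the knowledge set) agent $j$ has not been replaced in $(t-s,t]$, hence $x_j(t) = x_j(t-s) =: Z$; with probability $1-e^{-\lr s}$ it has been replaced at least once, in which case Lemma~\ref{lem:IndepKsetValue} ensures that $x_j(t)$ is an i.i.d. copy $Y$ of $Z$, independent of $\Ksetie(t)$. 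Applying Lemma~\ref{lem:General:single_error:MSE_RV} with $p = e^{-\lr s}$ gives
\begin{equation*}
\condE{\Cji{*}(t)}{\ExistInfo{j}{i}(t),\Tjie(t)=s} = \bigl(1 - e^{-2\lr s}\bigr)\sigma^2.
\end{equation*}

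Finally, I would integrate this against the conditional density of $\Tjie(t)$ given $\ExistInfo{j}{i}(t)$ and use that $\fjit{*}(s) = \condP{\Tjie(t)\leq s}{\ExistInfo{j}{i}(t)}'\cdot P(\ExistInfo{j}{i}(t))$ by definition of the pseudo-PDF, together with $P(\NoInfo{j}{i}(t)) + \int_0^t \fjit{*}(s)\,\mathrm{d}s = 1$ (noting that $\Tjie(t)\leq t$ since nothing predates the initialization at time $0$). Combining the two cases yields
\begin{equation*}
\Ep{\Cji{*}(t)} = \sigma^2\brk{P(\NoInfo{j}{i}(t)) + \int_0^t \fjit{*}(s)\,\mathrm{d}s - \int_0^t \fjit{*}(s)e^{-2\lr s}\,\mathrm{d}s},
\end{equation*}
and the first two terms collapse to $1$, giving the claimed formula.

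The main obstacle I anticipate is the bookkeeping around conditioning: one must justify that $\condE{x_j(t)}{\Ksetie(t)}$ depends on the knowledge set only through $(x_j(t-\Tjie(t)),\Tjie(t))$. This is where Lemma~\ref{lem:Stt:Kset:AgeOfInfo} (the replacement probability is determined solely by the age) and Lemma~\ref{lem:IndepKsetValue} (post-replacement values are independent of earlier knowledge) are essential: without them the conditional expectation could, in principle, use labels, other agents' values, or interaction times to refine its estimate. Once this reduction is made rigorous, the remainder is a clean application of Lemma~\ref{lem:General:single_error:MSE_RV} and integration against the pseudo-PDF.
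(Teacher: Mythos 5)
Your proposal is correct and follows essentially the same route as the paper's own proof: the same decomposition over the events $\NoInfo{j}{i}(t)$ and $\ExistInfo{j}{i}(t)$, the same use of Lemmas~\ref{lem:Stt:Kset:AgeOfInfo} and~\ref{lem:IndepKsetValue} to reduce the conditional estimation to the two-valued mixture of Lemma~\ref{lem:General:single_error:MSE_RV} with $p=e^{-\lr s}$, and the same final integration against the pseudo-PDF using $P(\NoInfo{j}{i}(t))+\int_0^t\fjit{*}(s)\,\mathrm ds=1$. Your closing remark on justifying that the conditional expectation depends on $\Ksetie(t)$ only through the most recent value and its age is a fair point of rigor that the paper itself treats only implicitly.
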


\begin{proof}
    Let us characterize
    $$\Ep{\Cji{*}(t)} = \Ep{\prt{x_j(t)-\condE{x_j(t)}{\Ksetie(t)}}^2}.$$
    
    Let $\NoInfo{j}{i}(t)$ denote the event that there is no information about agent $j$ in $\Ksetie(t)$, \textit{i.e.}, the event that $\prt{x_j(s),j,s}\notin\Ksetie(t)$ for all $s\leq t$, and let $\ExistInfo{j}{i}(t)$ denote its complementary event.
    
    The random variable $\Tjie(t)$ is well defined only if there is such information (\textit{i.e.}, in the event $\ExistInfo{j}{i}(t)$).
    Hence there holds
    \begin{align*}
        \Ep{\Cji*(t)} = &P\brk{\NoInfo{j}{i}(t)} \condE{\Cji*(t)}{\NoInfo{j}{i}(t)} \\
        &+ \int_0^t \fjit*(s)\ \condE{\Cji*(t)}{\Tjie(t)=s}\mathrm ds,
    \end{align*}
    where we remind $\fjit*(s)$ is the pseudo-PDF of $\Tjie(t)$ as defined in \eqref{eq:General:PseudoPDF}.
    In the event $\NoInfo{j}{i}(t)$, then Lemma~\ref{lem:IndepKsetValue} guarantees that $\condE{x_j(t)}{\Ksetie(t)\cap\NoInfo{j}{i}(t)}=0$, and the MSE is given by
    \begin{equation*}
        \condE{\Cji*(t)}{\NoInfo{j}{i}(t)}
        = \Ep{x_j(t)^2}
        = \sigma^2.
    \end{equation*}
    
    In the event $\ExistInfo{j}{i}(t)$, the value $x_j(t)$ given $\Ksetie(t)$ is 
    \begin{equation*}
        x_j(t) = 
        \begin{cases}
            x_j\prt{t-\Tjie(t)}     &\hbox{w.p. } e^{-\lr\Tjie(t)}\\
            X                       &\hbox{otherwise}
        \end{cases},
    \end{equation*}
    where $X$ is the unknown random value taken by agent $j$ if a replacement happened, and is thus a random variable independent of $x_j\prt{t-\Tjie(t)}$ following the same distribution.
    The probability $e^{-\lr\Tjie(t)}$ is the probability that agent $j$ was not replaced since the time $t-\Tjie(t)$ given $\Ksetie(t)$, obtained from Lemma~\ref{lem:Stt:Kset:AgeOfInfo}.

    Lemma~\ref{lem:General:single_error:MSE_RV} can then be applied to obtain
    \begin{equation*}
        \condE{\Cji*(t)}{\Tjie(t)} = \prt{1-e^{-2\lr\Tjie(t)}}\sigma^2.
    \end{equation*}
    Moreover, the following holds by definition of $\Tjie(t)$:
    \begin{equation*}
        \int_0^t \fjit*(s)\mathrm ds = P\brk{\ExistInfo{j}{i}(t)} = 1-P\brk{\NoInfo{j}{i}(t)}.
    \end{equation*}
    Combining everything together then gives
    \begin{align*}
        &\Ep{\Cji{*}(t)} = P\brk{\NoInfo{j}{i}(t)} \sigma^2 + \int_0^t \fjit*(s) \prt{1-e^{-2\lr s}}\sigma^2\mathrm ds\\
        &= \prt{P\brk{\NoInfo{j}{i}(t)}+1-P\brk{\NoInfo{j}{i}(t)} - \int_0^t \fjit{*}(s) e^{-2\lr s} \mathrm ds}\sigma^2,
    \end{align*}
    which ultimately leads to the conclusion.
\end{proof}

Proposition~\ref{prop:General:single_error} is the last result we need to build the proof of Theorem~\ref{thm:General:bound}, that we provide now.
\begin{proof}[Proof of Theorem~\ref{thm:General:bound}]
    First, Proposition~\ref{prop:General:opti_algo} shows that the performance of any algorithm is lower bounded by that achieved by the estimate $y_i^*(t) = \condE{\bar x(t)}{\Ksetie(t)}$, so that there holds
    \begin{equation*}
        \Ep{C(t)} \geq \Ep{\prt{\bar x(t)-y_i^*(t)}^2}.
    \end{equation*}
    Second, Proposition~\ref{prop:General:decomp} shows that 
    \begin{equation*}
        \Ep{\prt{\bar x(t)-y_i^*(t)}^2} 
        = \frac{1}{N^3}\sum_{i=1}^N \sum_{j=1}^N \Ep{\Cji*(t)}
    \end{equation*}
    with 
    \begin{equation*}
        \Cji{*}(t) := \prt{x_j(t)-\condE{x_j(t)}{\Ksetie(t)}}^2.
    \end{equation*}
    Finally, there holds from Proposition~\ref{prop:General:single_error} that
    \begin{equation*}
        \Ep{\Cji{*}(t)} = \prt{1 - \int_0^t \fjit{*}(s)e^{-2\lr s}\mathrm{d}s}\sigma^2.
    \end{equation*}
    Injecting this in the result of Proposition~\ref{prop:General:decomp}, and combining it with that of Proposition~\ref{prop:General:opti_algo} thus concludes the proof.
\end{proof}

\begin{remark}
    \label{rem:General:InterModel_VS_algo}
    Even though the proof of Theorem~\ref{thm:General:bound} builds on the analysis of a particular algorithm (\textit{i.e.}, Proposition~\ref{prop:General:opti_algo}), its result is valid for any algorithm that can be implemented on the same interaction model, including unstable ones.
\end{remark}
In the sequel, we provide several instantiations of the result of Theorem~\ref{thm:General:bound} by considering interaction models that allow the implementation of the Gossip algorithm, and that will thus be valid lower bounds for its performance.
More generally, one can instantiate the result of Theorem~\ref{thm:General:bound} with any suitable interaction model by defining the corresponding pseudo-PDF $\fjit*(s)$, and injecting it into \eqref{eq:thm:General:bound}.

\section{Instantiations of the bound}
\label{Sec:App}

In this section, we instantiate the lower bound on $\Ep{C(t)}$ obtained in Theorem~\ref{thm:General:bound} for two interaction models.
The first one, for illustrative purpose, supposes that the agents can learn the current state of the whole system at once, and that no memory erasure is performed at replacements.
The second one consists in pairwise exchanges of information.
Note that the first model already provides a (conservative) lower bound for the setting of the second one: as it will be seen it allows in particular the implementation of the Gossip algorithm, defined on the second model.

\subsection{First model: Ping updates}
\label{Sec:App:Ping}

\begin{defi}[Ping model]
\label{def:App:Ping:model}
    The Ping model satisfies:
    \begin{itemize}
        \item Each agent $i$ receives, at times defined by an individual Poisson clock of rate $\lambda_p$, information from all other agents, \emph{i.e.}, $\InfoExch{j}{i}$ for all $j$. We call this update \quotes{Ping update};
        \item At a replacement, the arriving agent inherits all the information held by the agent being replaced.
    \end{itemize}
    We denote by $\PingKset{i}(t)$ the knowledge set of agent $i$ at time $t$ corresponding to a stochastic event sequence $\epsilon^{Ping}$ obtained from that model.
\end{defi}
Note that the Ping model requires a slight adaptation of Definition~\ref{def:Stt:General_Kset} to allow memory inheritance at replacements, which would read $\PingKset{i}(t_R^+) = \PingKset{i}(t_R^-) \cup \brc{\prt{x_i',i,t_R}}$ with $t_R$ being a replacement time of agent $i$.
One can verify that all our results and arguments from Section~\ref{Sec:General} remain valid.
We provide in Fig.~\ref{fig:PingKset} an illustration of possible evolutions of the knowledge sets of three agents with the Ping model. 

Observe moreover that no assumption is made on the dependence or independence of the Poisson clocks driving the Ping updates of the different agents; the results happen to be independent of such assumptions (in that sense it is actually a class of models rather than a model). 
Introducing dependencies will reveal useful, as \textit{e.g.}, having pairs of agents performing simultaneous Ping updates provides a relaxation for pairwise interactions, considered in Section~\ref{Sec:App:SIS}.

\begin{figure}
    \centering
    \includegraphics[width=0.5\textwidth]{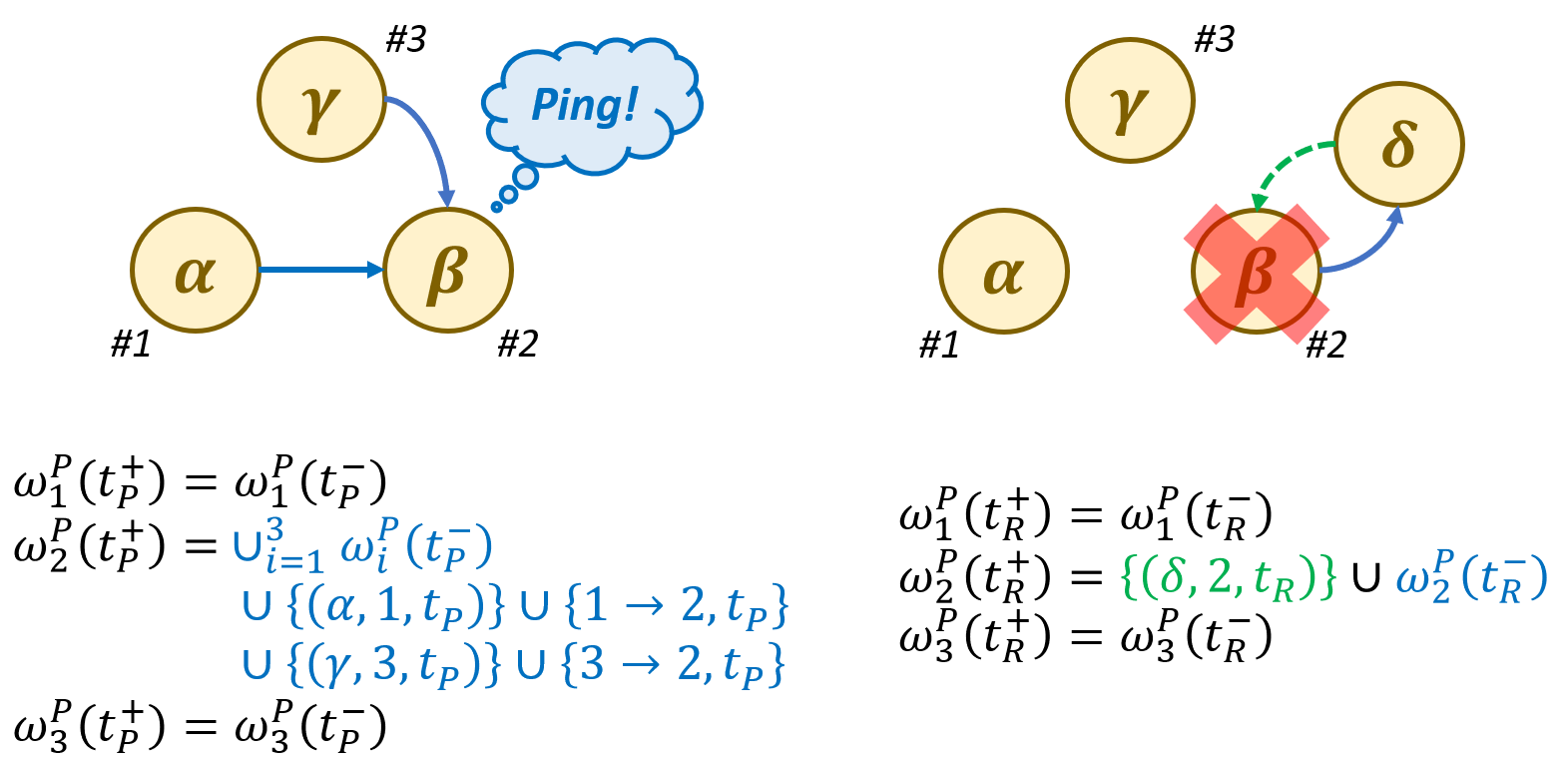}
    \caption{Impact of a Ping update and of a replacement on the knowledge sets with the Ping model as described in Section~\ref{Sec:App:Ping}. An agent obtains the information held by all the other agents at that time when it performs a Ping update (on the left), or an agent inherits the information held by the one it replaces at a replacement (on the right). We use the notation $\Kset{i}{P}(t)$ instead of $\PingKset{i}(t)$ to lighten the notations.}
    \label{fig:PingKset}
\end{figure}

To apply Theorem~\ref{thm:General:bound} and obtain our bound, we first need to compute the pseudo-PDF $\Pingfjit(s)$ as defined in \eqref{eq:General:PseudoPDF}.

\begin{prop}
    \label{prop:Ping:fji}
    With the Ping model, the random variable $\PingTji(t)$ admits the following pseudo-PDF for any $j\neq i$:
    \begin{equation}
        \label{eq:prop:Ping:fji}
        \Pingfjit(s) = \lp e^{-\lp s}.
    \end{equation}
\end{prop}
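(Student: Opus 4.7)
The plan is to reduce $\PingTji(t)$ to the age of the Ping clock of label $i$ at time $t$, and then to exploit the memoryless structure of that clock. Concretely, the argument has three steps: (i) identify which entries of $\PingKset{i}(t)$ carry information about agent $j$; (ii) check that the label-level Ping process is itself Poisson of rate $\lp$; (iii) read off the age distribution.

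For step (i), by Definition~\ref{def:App:Ping:model}, every Ping update of the agent currently holding label $i$ at some time $\tau\leq t$ inserts the tuple $(x_j(\tau),j,\tau)$ into $\PingKset{i}(\tau^+)$ for every $j\neq i$; by the inheritance rule of the Ping model, these tuples are preserved through any subsequent replacement of $i$. Conversely, the only way for an entry about $j$ to enter $\PingKset{i}(\cdot)$ is through such a Ping update (there are no other exchanges in this model). Hence $(x_j(t-s),j,t-s)\in\PingKset{i}(t)$ iff a Ping of label $i$ occurred at time $t-s$, so that $\PingTji(t)$ exists iff at least one Ping has occurred in $[0,t]$, and in that case equals $t-\tau^\ast$ with $\tau^\ast$ the most recent Ping time of label $i$ before $t$.

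For step (ii), I would argue that the sequence of Ping events attached to label $i$ over $[0,t]$ is a Poisson process of rate $\lp$. While $i$ is held by a given physical agent, its clock is Poisson of rate $\lp$ by definition. At a replacement of $i$, that clock is replaced by a fresh independent Poisson clock of rate $\lp$; by the memoryless property of the exponential distribution, the waiting time from the replacement to the next Ping has the same $\mathrm{Exp}(\lp)$ law as if the previous clock had continued. Combined with Assumption~\ref{ass:stochasticEventSeq} (independence between replacements, Ping clocks, and values), this ensures that the merged label-level Ping process on $[0,t]$ is indistinguishable from a single rate-$\lp$ Poisson process.

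Step (iii) is then a standard Poisson computation: for $s\in[0,t]$,
\begin{equation*}
    P\brk{\PingTji(t)\leq s} = P(\text{at least one label-}i\text{ Ping in }[t-s,t]) = 1-e^{-\lp s},
\end{equation*}
so the pseudo-CDF is $\PingFjit(s)=1-e^{-\lp s}$ on $[0,t]$, with residual mass $e^{-\lp t}$ corresponding to the event $\NoInfo{j}{i}(t)$ (no Ping at all in $[0,t]$), consistent with the pseudo-CDF not reaching $1$. Differentiating yields $\Pingfjit(s)=\lp e^{-\lp s}$, and the expression is independent of $j$ because every Ping delivers information about all other labels simultaneously. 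The only genuinely non-routine point is step (ii), i.e.\ verifying that replacements do not disturb the Poisson character of the label-level Ping stream, and this is handled cleanly by memorylessness.
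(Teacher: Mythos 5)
Your proposal is correct and follows essentially the same route as the paper's proof: identify $\PingTji(t)$ with the age of the most recent Ping update of label $i$, use the rate-$\lp$ Poisson clock to get $\PingFjit(s)=1-e^{-\lp s}$, and differentiate. You simply spell out more carefully the two points the paper leaves implicit (that replacements preserve the Poisson character of the label-level Ping stream by memorylessness and inheritance, and that the residual mass $e^{-\lp t}$ accounts for $\NoInfo{j}{i}(t)$); the only slight imprecision is the intermediate ``iff'' claim, since a tuple $(x_j(t-s),j,t-s)$ can also enter $\PingKset{i}(t)$ via an inherited knowledge set of a third agent that pinged at $t-s$, but this does not affect the minimum and hence not the conclusion.
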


\begin{proof}
With the Ping model, the age of the most recent information about an agent held by another one is the age of the last time it performed a Ping update.
Since those happen according to a Poisson clock of rate $\lp$, there holds
\begin{equation*}
    \label{eq:proof:prop:Ping:fji}
    \PingFjit(s) = P\brk{\PingTji(t)\leq s} = 1-e^{-\lp s}.
\end{equation*}
The conclusion follows from $\Pingfjit(s) = \tfrac{d}{ds}\PingFjit(s)$.
\end{proof}

Since the pseudo-PDF $\Pingfjit(s)$ obtained in Proposition~\ref{prop:Ping:fji} does not depend either on the agents nor the time $t$, the lower bound on $\Ep{C(t)}$ can be derived from applying Corollary~\ref{cor:General:symmetric_steadyState}.

\begin{thm}
\label{thm:Ping:bound}
For any algorithm that can be implemented on $\PingKset{i}(t)$, there holds
\begin{equation}
    \label{eq:thm:Ping:bound}
    \Ep{C(t)} 
    \geq \frac{N-1}{N^2} 
    \prt{\frac{1}{1+\frac12\frac{\lp}{\lr}} + \frac{e^{-\prt{\lp+2\lr}t}}{1+2\frac{\lr}{\lp}}} \sigma^2.
\end{equation}
This holds even if agents know the system size $N$, the rates $\lr$ and $\lp$, and the distribution defining their values.
\end{thm}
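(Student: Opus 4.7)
The proof is essentially a direct application of the machinery developed in Section~\ref{Sec:General}, combined with the explicit pseudo-PDF computed in Proposition~\ref{prop:Ping:fji}. My plan is to invoke Theorem~\ref{thm:General:bound} (or equivalently Corollary~\ref{cor:General:symmetric_steadyState}) with the specific form of $\Pingfjit(s)$ given by the Ping model and then reduce the resulting integral to the closed form stated in \eqref{eq:thm:Ping:bound}.

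First, I would observe that Proposition~\ref{prop:Ping:fji} gives $\Pingfjit(s) = \lp e^{-\lp s}$ for all $j \neq i$ and all $t$, so the pseudo-PDF is symmetric in $i,j$ (for $i \neq j$) and time-independent. For the diagonal terms $j = i$, an agent always has zero-age information about itself, so $\PingTji(t) \equiv 0$ for $j = i$ and $\int_0^t \fjit{Ping}(s) e^{-2\lr s}\mathrm ds = 1$, making those $N$ diagonal contributions to the double sum in \eqref{eq:thm:General:bound} vanish. The remaining $N(N-1)$ off-diagonal terms are identical, so Theorem~\ref{thm:General:bound} simplifies to
\begin{equation*}
\Ep{C(t)} \geq \frac{N-1}{N^2}\prt{1 - \int_0^t \lp e^{-\lp s} e^{-2\lr s}\mathrm ds}\sigma^2,
\end{equation*}
which is exactly the specialization that Corollary~\ref{cor:General:symmetric_steadyState} provides (applied to $f^*(s) = \lp e^{-\lp s}$ on the off-diagonal entries).

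The remaining step is a pure calculation: the integral evaluates to $\frac{\lp}{\lp + 2\lr}\bigprt{1 - e^{-(\lp + 2\lr)t}}$, so
\begin{equation*}
1 - \frac{\lp}{\lp + 2\lr}\bigprt{1 - e^{-(\lp + 2\lr)t}} = \frac{2\lr}{\lp + 2\lr} + \frac{\lp}{\lp + 2\lr}e^{-(\lp + 2\lr)t},
\end{equation*}
which rewrites as $\frac{1}{1+\tfrac12\lp/\lr} + \frac{1}{1+2\lr/\lp}\,e^{-(\lp + 2\lr)t}$ upon dividing numerator and denominator appropriately. This matches \eqref{eq:thm:Ping:bound} verbatim.

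Since every ingredient is already in place, there is no real obstacle; the only thing to be careful about is the treatment of the $j = i$ terms (which do not fit the hypothesis of Proposition~\ref{prop:Ping:fji} but contribute zero), and the final algebraic rearrangement into the two-fraction form appearing in the statement. Finally, the clause about agents knowing $N$, $\lr$, $\lp$, and the value distribution is inherited automatically from Theorem~\ref{thm:General:bound}, since that bound already holds against algorithms with access to these quantities.
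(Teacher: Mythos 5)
Your proposal is correct and follows exactly the paper's route: instantiate Corollary~\ref{cor:General:symmetric_steadyState} with the pseudo-PDF $\lp e^{-\lp s}$ from Proposition~\ref{prop:Ping:fji} and evaluate $\int_0^t \lp e^{-(\lp+2\lr)s}\,\mathrm ds = \tfrac{\lp}{\lp+2\lr}\bigprt{1-e^{-(\lp+2\lr)t}}$, whose complement rearranges into the two-fraction form of \eqref{eq:thm:Ping:bound}. Your explicit handling of the $j=i$ diagonal terms and of the final algebra is consistent with (and slightly more detailed than) the paper's proof, which leaves those steps implicit.
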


\begin{proof}
The conclusion follows the instantiation of $f^{*}(s)$ in Corollary~\ref{cor:General:symmetric_steadyState} with $\Pingfjit(s)$ from Proposition~\ref{prop:Ping:fji}.
The bound is then obtained from the integration that follows:
\begin{equation*}
    \label{eq:proof:thm:Ping:bound}
    \Ep{C(t)} 
    \geq \frac{N-1}{N^2} \prt{1-\int_0^t \lp e^{-\lp s}e^{-2\lr s}\mathrm ds} \sigma^2.
\end{equation*}
The claim follows from a few algabraic manipulations after integrating the expression.
\end{proof}

Taking the limit as $t\rightarrow\infty$, or similarly by applying the second result of Corollary~\ref{cor:General:symmetric_steadyState}, we get the following corollary.

\begin{cor}
    \label{cor:Ping:symmetric_steadyState}
    For any algorithm that can be implemented on $\PingKset{i}(t)$, there holds
    \begin{equation}
        \label{eq:cor:Ping:symmetric_steadyState:symmetry_steadyState}
        \lim\inf_{t\rightarrow\infty} \Ep{C(t)} \geq \frac{N-1}{N^2} 
        \prt{\frac{1}{1+\frac12\frac{\lp}{\lr}}} \sigma^2.
    \end{equation}
\end{cor}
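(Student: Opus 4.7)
The plan is to obtain the corollary as a direct consequence of Theorem~\ref{thm:Ping:bound} by taking $t\to\infty$, with the only nontrivial observation being that the transient term vanishes. Since $\lambda_p > 0$ and $\lambda_r > 0$, the exponent $-(\lambda_p + 2\lambda_r)t$ tends to $-\infty$, so the second summand inside the large parentheses in \eqref{eq:thm:Ping:bound},
\begin{equation*}
\frac{e^{-(\lambda_p+2\lambda_r)t}}{1+2\frac{\lambda_r}{\lambda_p}},
\end{equation*}
tends to zero, while the first summand $\frac{1}{1+\frac12\frac{\lambda_p}{\lambda_r}}$ is constant in $t$.

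Alternatively, I would derive the bound directly from Corollary~\ref{cor:General:symmetric_steadyState}, \eqref{eq:cor:General:symmetric_steadyState:symmetry_steadyState}. Since Proposition~\ref{prop:Ping:fji} shows that $\Pingfjit(s) = \lambda_p e^{-\lambda_p s}$ does not depend on $i$, $j$ (for $j\neq i$), nor on $t$, the symmetry hypothesis of Corollary~\ref{cor:General:symmetric_steadyState} is satisfied with $f^*(s) = \lambda_p e^{-\lambda_p s}$. Then a single integral computation gives
\begin{equation*}
\int_0^\infty \lambda_p e^{-\lambda_p s} e^{-2\lambda_r s}\,\mathrm ds = \frac{\lambda_p}{\lambda_p+2\lambda_r},
\end{equation*}
so that
\begin{equation*}
1 - \int_0^\infty \lambda_p e^{-\lambda_p s} e^{-2\lambda_r s}\,\mathrm ds = \frac{2\lambda_r}{\lambda_p+2\lambda_r} = \frac{1}{1+\tfrac12\tfrac{\lambda_p}{\lambda_r}},
\end{equation*}
which matches exactly the expression in \eqref{eq:cor:Ping:symmetric_steadyState:symmetry_steadyState}.

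There is no real obstacle here: the result is an immediate corollary either as a $t\to\infty$ limit of Theorem~\ref{thm:Ping:bound} or as a direct specialization of Corollary~\ref{cor:General:symmetric_steadyState}. The only small care to take is in the use of $\liminf$ (rather than $\lim$) as mentioned right after Corollary~\ref{cor:General:symmetric_steadyState}: although in this particular Ping case the limit does exist (the bound in Theorem~\ref{thm:Ping:bound} is monotonically decreasing towards its asymptote), stating the result with $\liminf$ keeps it consistent with the general case where limits are not guaranteed. I would therefore write the proof in two lines: either invoke Theorem~\ref{thm:Ping:bound} and pass to the limit, or invoke Corollary~\ref{cor:General:symmetric_steadyState} with $f^*(s)$ from Proposition~\ref{prop:Ping:fji} and evaluate the resulting exponential integral.
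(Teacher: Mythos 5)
Your proposal is correct and matches the paper's own treatment: the paper obtains this corollary exactly by taking the limit $t\to\infty$ in Theorem~\ref{thm:Ping:bound} (where the transient exponential term vanishes), or equivalently by applying the asymptotic part of Corollary~\ref{cor:General:symmetric_steadyState} with $f^*(s)=\lp e^{-\lp s}$ from Proposition~\ref{prop:Ping:fji}, and your integral evaluation $1-\lp/(\lp+2\lr)=1/(1+\tfrac12\lp/\lr)$ is exactly the required computation.
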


\begin{figure}
    \centering
    \includegraphics[width=0.5\textwidth,clip = true, trim=0.5cm 10.25cm 0cm 10.5cm,keepaspectratio]{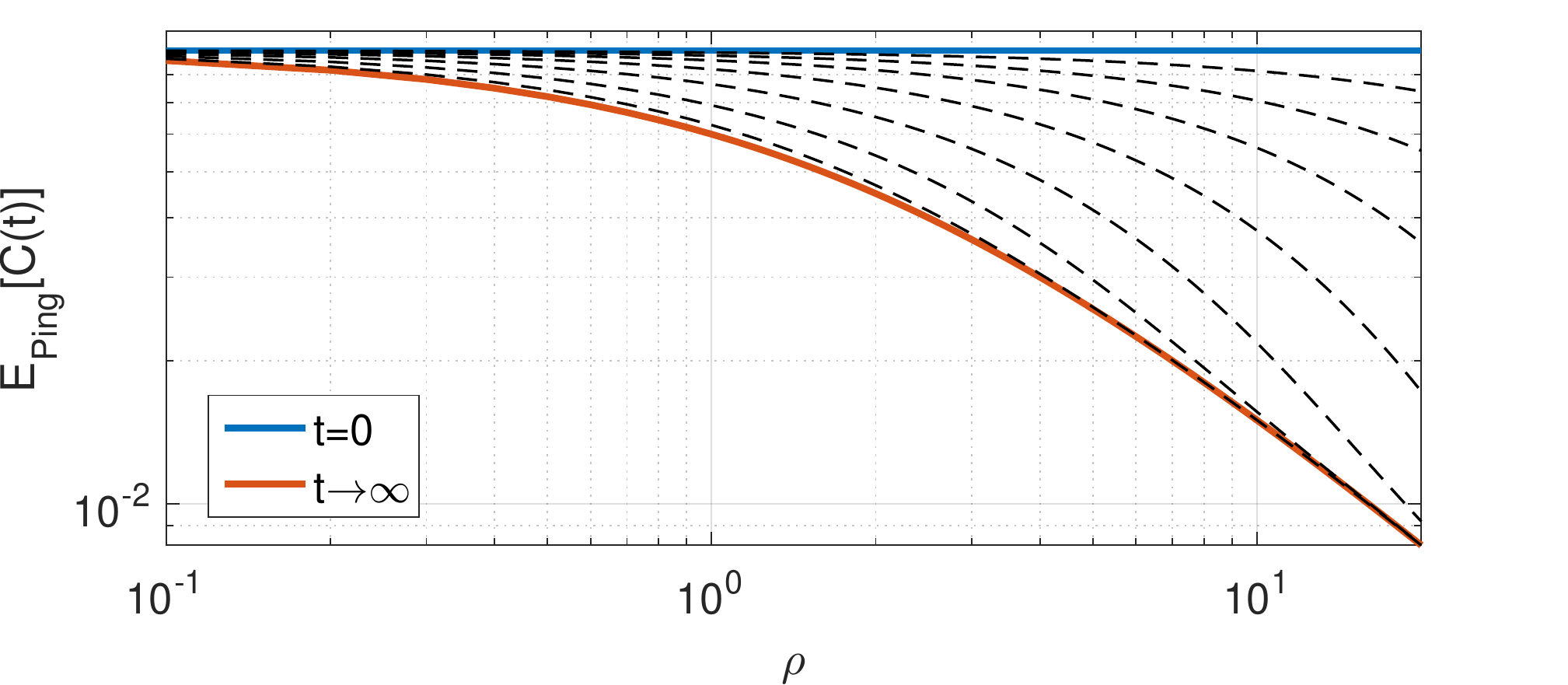}
    \caption{Time-dependent bound derived with the Ping model (\ref{eq:thm:Ping:bound}) with 10 agents and $\sigma^2=1$. The blue line is the MSE at the initialization of the system ($t=0$), the red line is the bound obtained when $t\rightarrow\infty$ (\ref{eq:cor:Ping:symmetric_steadyState:symmetry_steadyState}), and the dashed black lines are the bounds obtained at some times in between.} 
    \label{fig:Ping:TimeDepBound}
\end{figure}

Fig.~\ref{fig:Ping:TimeDepBound} shows the bound (\ref{eq:thm:Ping:bound}) in terms of the rate ratio $\rratio:=\lp/\lr$ (\textit{i.e.}, the average number of Ping updates experienced by an agent before leaving the system) for 10 agents at different times after initializing the system, until it converges when $t\rightarrow\infty$.
When the system is initialized, the only information available to the agents is their own value, and the MSE is exactly $\tfrac{N-1}{N^2}\sigma^2$ no matter the rate ratio.
With time passing, this MSE tends to be maintained if the communications are rather rare ($\rho\rightarrow0$), since the amount of available information remains mostly the same.
By contrast, the MSE decays to $0$ as the replacements become more frequent ($\rho\rightarrow\infty$), with the system behaving progressively like a closed system as $\rratio$ increases.
This behavior is enhanced as the time passes, to ultimately converge to a time-independent bound as $t\rightarrow\infty$.

Interestingly, the only dependence of the bound on the size of the system $N$ lies in the first factor of (\ref{eq:thm:Ping:bound}), $\frac{N-1}{N^2}\sigma^2$, namely the error at the initialization of the system.
The decay of that MSE is then entirely defined by the Ping update and replacement rates.
As $t\rightarrow\infty$, that decay gets only characterized by the rate ratio $\rratio = \lp/\lr$ \textit{i.e.}, the expected number of Ping updates experienced by an agent before leaving the system.

\subsection{Second model: pairwise interactions}
\label{Sec:App:SIS}

\begin{defi}[Gossip model]
    \label{Def:App:SIS:model}
    The system is subject to strictly pairwise interactions: each pair of agents $(i,j)$ (with $i\neq j$) interacts at random times defined by a Poisson clock of rate $\frac{1}{N-1}\lambda_c$, resulting in the simultaneous occurrence of $\InfoExch{j}{i}$ and $\InfoExch{i}{j}$.
    We denote by $\GossipKset{i}(t)$ the knowledge set of agent $i$ at time $t$ corresponding to a stochastic event sequence $\epsilon^{Gossip}$ obtained from that model.
\end{defi}

With this model, which corresponds to defining pairwise undirected exchanges of information, each agent is expected to interact on average $\lambda_c$ times per unit of times.
We provide in Fig.~\ref{fig:GossipKset} an illustration of the evolution of the knowledge sets of three agents at an interaction with the Gossip model.

The Gossip model raises more challenges than the Ping model does. 
Additionally to handling outdated information from unknown replacements, memory losses are considered as well as the propagation time of information that is not instantaneously transmitted to all the agents at interactions.

\begin{figure}
    \centering
    \includegraphics[width=0.45\textwidth]{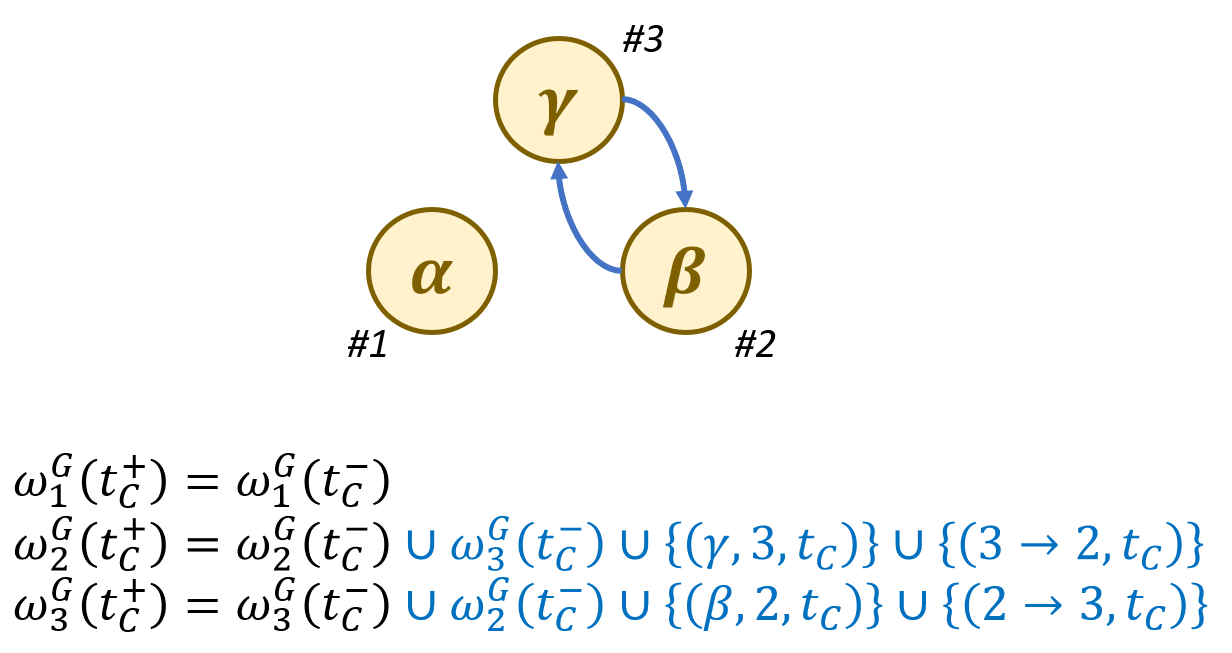}
    \caption{Occurrence of an information exchange with the Gossip model as described in Section~\ref{Sec:App:SIS}. When two agents $i$ and $j$ interact, both information exchanges from $j$ to $i$ and from $i$ to $j$ happen simultaneously. We use the notation $\Kset{i}{G}(t)$ instead of $\GossipKset{i}(t)$ to lighten the notations.}
    \label{fig:GossipKset}
\end{figure}

To apply Theorem~\ref{thm:General:bound} and obtain our bound, we first need to compute the pseudo-PDF $\Gossipfjit(s)$ as defined in \eqref{eq:General:PseudoPDF}.

\begin{prop}
    \label{prop:SIS:fji}
    With the Gossip model, the random variable $\GossipTji(t)$ admits the following pseudo-PDF for any $j\neq i$:
    \begin{equation}
        \label{eq:prop:SIS:fji}
        \Gossipfjit(s) = w^TAe^{As}\vect e_1,
    \end{equation}
    where $\Velem{w}{k} = \tfrac{k-1}{N-1}$, and where $A$ is a tridiagonal matrix with
    \begin{itemize}
        \item $\Melem{A}{k}{k} = -\prt{\frac{k(N-k)}{N-1}\lc +(k-1)\lr}$;
        \item $\Melem{A}{k}{k+1} = k\lr$;
        \item $\Melem{A}{k+1}{k} = \frac{k(N-k)}{N-1}\lc$.
    \end{itemize}
\end{prop}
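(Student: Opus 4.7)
My plan is to reformulate the event $\{\GossipTji(t) \leq s\}$ as a ``virus spreading'' event on the Gossip interaction graph and to read off $\Gossipfjit(s)$ from the corresponding Markov chain. Specifically, I would argue that $\GossipTji(t) \leq s$ iff there exists a sequence $j = c_0, c_1, \ldots, c_m = i$ and Gossip interaction times $t-s \leq \tau_1 \leq \cdots \leq \tau_m \leq t$ with $c_{\ell-1}$ and $c_\ell$ interacting at $\tau_\ell$, such that no $c_\ell$ (for $\ell \geq 1$) is replaced in $[\tau_\ell, \tau_{\ell+1}]$ (setting $\tau_{m+1} := t$). The forward direction uses that the $j$-interaction at $\tau_1$ generates the triple $(x_j(\tau_1), j, \tau_1)$ with timestamp $\geq t-s$, which is then forwarded along the live chain; the converse is that any recent-enough triple about $j$ in agent $i$'s knowledge set must have been produced by a $j$-event at some $\tau^\ast \geq t-s$ and propagated along such a chain. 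A key point is that $j$'s own replacements do not interrupt the chain, because each new $j$ instantaneously holds a fresh triple about itself, leaving $j$ as a perpetual ``source''.

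Having fixed this viewpoint, I would define $I(\sigma) \subseteq \{1, \ldots, N\}$ (for $\sigma \in [0, s]$) as the set of agents currently ``infected'' by the propagation started at $I(0) = \{j\}$, and set $K(\sigma) := |I(\sigma)|$. Under Assumption~\ref{ass:stochasticEventSeq} together with Definition~\ref{Def:App:SIS:model}, $K(\sigma)$ is a pure-jump Markov chain on $\{1, \ldots, N\}$ with growth rate $\frac{k(N-k)}{N-1}\lambda_c$ from $k$ to $k+1$ (the $k(N-k)$ infected/susceptible pairs each interact at rate $\lambda_c/(N-1)$) and shrinkage rate $(k-1)\lambda_r$ from $k$ to $k-1$ (only the $k-1$ infected agents other than $j$ can be ``cured'' by replacement). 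These rates produce exactly the tridiagonal generator $A$ of the statement, so the distribution $\pi(s)$ of $K(s)$ satisfies $\pi(s) = e^{As}\vect e_1$.

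The third step is an exchangeability argument: the dynamics of $I(\sigma)$ treat the $N-1$ agents of $\{1, \ldots, N\} \setminus \{j\}$ interchangeably, so conditional on $K(s) = k$ the subset $I(s) \setminus \{j\}$ is uniformly distributed among the $\binom{N-1}{k-1}$ size-$(k-1)$ subsets of $\{1, \ldots, N\}\setminus\{j\}$. In particular, $P[i \in I(s) \mid K(s) = k] = (k-1)/(N-1) = \Velem{w}{k}$. Combining,
\[
\GossipFjit(s) = P[i \in I(s)] = \sum_{k=1}^N \Velem{w}{k}\, \Velem{\pi(s)}{k} = w^T e^{As}\vect e_1,
\]
and differentiating in $s$ yields $\Gossipfjit(s) = w^T A e^{As}\vect e_1$ as claimed.

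The main obstacle is Step~1: I must argue carefully that replacements of $j$ itself do not interrupt the chain, and that replacements of intermediate agents between their interactions are the only way a would-be propagation of a triple about $j$ can be broken. This is precisely what produces the $(k-1)\lambda_r$ shrinkage rate (rather than $k\lambda_r$) in the generator $A$. Once Step~1 is secured, Steps~2--4 follow from the symmetric and memoryless structure of the Gossip model combined with a one-line differentiation.
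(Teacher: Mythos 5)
Your proposal is correct and follows essentially the same route as the paper: the paper also tracks the number $n_j(t,t-s)$ of agents holding information about $j$ more recent than $t-s$, models it as a birth-death chain with birth rate $\frac{k(N-k)}{N-1}\lc$ and death rate $(k-1)\lr$, applies the same exchangeability argument to get the weights $\frac{k-1}{N-1}$, and differentiates $w^Te^{As}\vect e_1$. Your Step~1 (the chain-propagation characterization and the observation that $j$ is a perpetual source, hence $(k-1)\lr$ rather than $k\lr$) is a more explicit justification of what the paper states directly, and matches its SIS-infection interpretation.
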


\begin{proof}
    The pseudo-CDF $\GossipFjit(s)$ is defined as
    \begin{align*}
        \GossipFjit(s) 
        &= P\brk{\GossipTji(t)\leq s}\\
        &= P\brk{\exists \tau\in\brk{t-s,t}: \bigprt{j,\tau,x_j(\tau)}\in\GossipKset{i}(t)}.
    \end{align*}
    
    Denote by $n_j(t,t')$ the number of agents having at least one information about $j$ more recent than $t'$ at time $t$.
    There holds
    \begin{align*}
        \GossipFjit(s)
        &= \sum_{k=1}^N \frac{k-1}{N-1} P\brk{n_j(t,t-s)=k}.
    \end{align*}
    
    The factor $\frac{k-1}{N-1}$ comes from the absence of distinction between the agents during the interactions and replacements.
    
    The value $n_j(t,t-s)$ is constant between events, and potentially modified at replacements and interactions. 
    Indeed, if $n_j(t,t-s)=k$, then this value
    \begin{itemize}
        \item increases by one with rate $\frac{k(N-k)}{N-1}\lc$ from interactions;
        \item decreases by one with rate $(k-1)\lr$ from replacements.
    \end{itemize}
    
    The value $n_j(t,t-s)$ thus follows a continuous-time Markov chain, and more precisely a birth-death process.
    
    Denote $\vect P(s)$ such that $\Velem{\vect P(s)}{k} = P\brk{n_j(t,t-s)=k}$ for $k~=~1\ldots N$, where the dependence on $t$ is omitted.
    Then, from the birth-death Markov process properties, there holds
    \begin{align*}
        \frac{d}{ds}\vect P(s) = A\vect P(s),
    \end{align*}
    with $A$ the transition rate matrix of the process defined as the $N \times N$ tridiagonal matrix such that
    \begin{itemize}
        \item $\Melem{A}{k}{k} = -\prt{\frac{k(N-k)}{N-1}\lc +(k-1)\lr}$;
        \item $\Melem{A}{k}{k+1} = k\lr$;
        \item $\Melem{A}{k+1}{k} = \frac{k(N-k)}{N-1}\lc$.
    \end{itemize}
    
    The solution of this ODE system is given by
    \begin{align*}
        &\vect P(s) = e^{As}\vect P(0)&
        &\hbox{with}&
        &\vect P(0) = \vect e_1.
    \end{align*}
    
    Re-injecting the above expression into that of $\GossipFjit(s)$ yields
    \begin{align*}
        \GossipFjit(s) = w^Te^{As}\vect e_1,
    \end{align*}
    where $\Velem{w}{k} = \frac{k-1}{N-1}$.
    One has then
    \begin{align*}
        \Gossipfjit(s) = \frac{d}{ds}\GossipFjit(s) = w^TAe^{As}\vect e_1,
    \end{align*}
    which concludes the proof.
\end{proof}

It is interesting to notice that this is exactly the behavior of an SIS infection process (Susceptible-Infectious-Susceptible) \cite{MISC:Epidemiology}, where the disease is an information about a given agent.
New infections occur when an agent knowing that information interacts with one that does not, and healing happens through replacements of agents that know the information.

Since the pseudo-PDF $\Gossipfjit(s)$ obtained in Proposition~\ref{prop:SIS:fji} does not depend either on the agents nor the time $t$, the lower bound on $\Ep{C(t)}$ can be derived by applying Corollary~\ref{cor:General:symmetric_steadyState}.

\begin{thm}
\label{thm:SIS:bound}
For any algorithm that can be computed on $\GossipKset{i}(t)$, there holds
\begin{equation}
    \label{eq:thm:SIS:bound}
    \small
    \Ep{C(t)} 
    \geq \frac{N-1}{N^2} \prt{1-w^TA\prt{A-2\lr}^{-1}\brk{e^{\prt{A-2\lr}t}-I}\mathbf{e_1}} \sigma^2,
\end{equation}
where $w$ and $A$ are defined in Proposition~\ref{prop:SIS:fji}.
This holds even if agents know the system size $N$, the rates $\lr$ and $\lc$, and the distribution defining their values.
\end{thm}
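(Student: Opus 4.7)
The strategy is to apply Corollary~\ref{cor:General:symmetric_steadyState} directly, using the pseudo-PDF computed in Proposition~\ref{prop:SIS:fji}. Observe that the expression $\Gossipfjit(s) = w^T A e^{As}\vect e_1$ depends neither on the agent indices $i,j$ (for $j\neq i$) nor on the current time $t$, so the symmetry hypothesis of the corollary is satisfied. Instantiating $f^*(s)$ with this pseudo-PDF gives
\begin{equation*}
    \Ep{C(t)} \geq \frac{N-1}{N^2}\prt{1 - \int_0^t w^T A e^{As}\vect e_1\, e^{-2\lr s}\,\mathrm ds}\sigma^2,
\end{equation*}
so the whole proof reduces to evaluating this integral in closed form.

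Since the vectors $w^T A$ and $\vect e_1$ do not depend on $s$, they can be pulled outside the integral, which collapses to the matrix integral
\begin{equation*}
    \int_0^t e^{(A - 2\lr I)s}\,\mathrm ds.
\end{equation*}
Provided $A - 2\lr I$ is invertible, this evaluates to $(A - 2\lr I)^{-1}\prt{e^{(A-2\lr I)t} - I}$ by the standard matrix-exponential primitive. Substituting back yields exactly the right-hand side of \eqref{eq:thm:SIS:bound}, with the notational convention that $2\lr$ inside the matrix expressions stands for $2\lr I$.

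The only genuine step beyond routine calculus is establishing invertibility of $A - 2\lr I$, and this is where I would spend most of the care. The matrix $A$ is the generator of the continuous-time birth-death Markov chain describing $n_j(t,t-s)$, introduced in the proof of Proposition~\ref{prop:SIS:fji}; in particular its column sums are zero (as directly verifiable from the tridiagonal entries), so $\ONE^T$ is a left eigenvector associated with eigenvalue $0$, and by standard properties of Markov generators all other eigenvalues lie in the closed left half-plane. Adding $-2\lr I$ with $\lr > 0$ shifts the whole spectrum by $-2\lr$, placing every eigenvalue strictly to the left of the imaginary axis and ensuring in particular that $0$ is not in the spectrum. Hence $A - 2\lr I$ is invertible and the integration formula applies, completing the proof.
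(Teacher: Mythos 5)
Your proof follows exactly the same route as the paper's: instantiate Corollary~\ref{cor:General:symmetric_steadyState} with the pseudo-PDF from Proposition~\ref{prop:SIS:fji}, pull the constant vectors out of the integral, and integrate the matrix exponential after checking invertibility of $A-2\lr I$. It is correct, and your spectral argument for invertibility (zero column sums of the generator plus the shift by $-2\lr$) actually supplies the detail the paper dismisses with ``one can show''.
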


\begin{proof}
We inject $\Gossipfjit(s)$ from Proposition~\ref{prop:SIS:fji} into equation (\ref{eq:cor:General:symmetric_steadyState:symmetry}) from Corollary~\ref{cor:General:symmetric_steadyState} in order to derive the bound. 
The integral then becomes
\begin{align*}
    \int_0^t \Gossipfjit(s)e^{-2\lr s}\mathrm{d}s
    &= \int_0^t w^TAe^{As}\mathbf e_1e^{-2\lr s}\mathrm{d}s.
\end{align*}

Using the commutativity between $A$ and $2\lr$, there holds
\begin{align*}
    \int_0^t w^TAe^{As}\mathbf e_1e^{-2\lr s}\mathrm{d}s
    &= w^TA\int_0^t e^{(A-2\lr)s} \mathrm ds \ \mathbf e_1.
\end{align*}

One can show that as long as $\lr\neq 0$, then $(A-2\lr)$ is invertible, and one has
\begin{align*}
    \int_0^t e^{(A-2\lr)s} \mathrm ds
    &= (A-2\lr)^{-1} \brk{e^{(A-2\lr)t}-I}.
\end{align*}

Injecting that result into (\ref{eq:cor:General:symmetric_steadyState:symmetry}) concludes the proof.
\end{proof}

Taking the limit as $t\rightarrow\infty$, or similarly by applying the second result of Corollary~\ref{cor:General:symmetric_steadyState}, we get the following corollary, that is proved in Appendix~\ref{sec:Annex:proof_Corollary_XYZ}.

\begin{cor}
    \label{cor:SIS:steadyState}
    For any algorithm that can be implemented on $\GossipKset{i}(t)$, there holds
    \begin{align}
        \label{eq:cor:SIS:symmetric_steadyState:symmetry_steadyState}
        \lim\inf_{t\rightarrow\infty} \Ep{C(t)} 
        &\geq \frac{N-1}{N^2} \prt{1-w^TA\prt{2\lr-A}^{-1}\mathbf{e_1}} \sigma^2\\
        \label{eq:cor:SIS:XYZ_bound}
        &\geq \frac{N-1}{N^2} \prt{\frac72 + \log\prt{\frac{N-2}{2}} + C_N} \rratio^{-1}\sigma^2\nonumber\\
        &\ \ \ \ \ \ + \mathcal O\prt{\frac{N}{\rratio^2}}\sigma^2,
    \end{align}
    where $C_N$ is some known polynomial term in $\mathcal O(N^{-1})$.
\end{cor}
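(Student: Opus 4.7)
The plan has two parts matching the two inequalities in the statement.

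\textit{Part 1 (the limit as $t\to\infty$).} The first inequality is obtained by passing to the limit in Theorem~\ref{thm:SIS:bound}. The key step is to argue that $e^{(A-2\lr I)t}\to 0$ entrywise as $t\to\infty$. Since $A$ is the infinitesimal generator of a continuous-time Markov chain on the finite state space $\{1,\dots,N\}$ (see the proof of Proposition~\ref{prop:SIS:fji}), its spectrum lies in the closed left half-plane. Shifting by $-2\lr I$ then forces every eigenvalue of $A-2\lr I$ to have real part at most $-2\lr<0$, which guarantees that the matrix exponential decays. Passing to the limit in \eqref{eq:thm:SIS:bound} and using $-(A-2\lr I)^{-1}=(2\lr I-A)^{-1}$ yields the first inequality.

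\textit{Part 2 (asymptotic expansion in $\rratio^{-1}$).} I would convert the algebraic quantity into a probabilistic one to enable a stochastic-domination argument. Set $u:=(2\lr I-A)^{-1}\mathbf{e_1}$. Multiplying $(2\lr I-A)u=\mathbf{e_1}$ on the left by $w^T$ and using $w^T\mathbf{e_1}=w_1=0$ gives $w^T A u=2\lr\,w^T u$, so
\begin{equation*}
1-w^T A(2\lr I-A)^{-1}\mathbf{e_1}=1-2\lr\,w^T u=\frac{N-\condE{X_\xi}{X_0=1}}{N-1},
\end{equation*}
where $X_\xi$ denotes the state, at an independent exponential time $\xi\sim\mathrm{Exp}(2\lr)$, of the Markov chain with generator $A$ started at state $1$. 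The second equality uses the resolvent identity $2\lr w^Tu=\condE{w_{X_\xi}}{X_0=1}$ together with $w_k=(k-1)/(N-1)$. The task reduces to upper bounding $\condE{X_\xi}{X_0=1}$.

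\textit{Stochastic domination and closed-form computation.} I would couple the chain with the pure-birth chain $X^{\mathrm{pb}}$ obtained by discarding all replacement ($\lr$) terms from $A$: inducting on event times (interactions strictly grow the infected set in both chains; replacements only shrink it in $X$) shows $X_s\leq X_s^{\mathrm{pb}}$ almost surely, hence $\condE{X_\xi}{X_0=1}\leq\condE{X_\xi^{\mathrm{pb}}}{X_0=1}$. Since $X^{\mathrm{pb}}$ is non-decreasing with independent exponential holding times $T_j\sim\mathrm{Exp}\bigl(\tfrac{j(N-j)}{N-1}\lc\bigr)$, a Laplace-transform computation yields
\begin{equation*}
\condE{X_\xi^{\mathrm{pb}}}{X_0=1}=1+\sum_{k=2}^N\prod_{j=1}^{k-1}\frac{j(N-j)\rratio}{j(N-j)\rratio+2(N-1)}.
\end{equation*}
Expanding each factor in $\rratio^{-1}$ and swapping the order of summation via the identity $\sum_{k=2}^N\sum_{j=1}^{k-1}\tfrac{1}{j(N-j)}=\sum_{j=1}^{N-1}\tfrac{1}{j}=H_{N-1}$ produces the leading behaviour $N-\condE{X_\xi^{\mathrm{pb}}}{X_0=1}=\tfrac{2(N-1)H_{N-1}}{\rratio}+\mathcal O(N/\rratio^2)$. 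The explicit form $\tfrac{7}{2}+\log\bigl(\tfrac{N-2}{2}\bigr)+C_N$ in the statement then follows from isolating the first few terms of $H_{N-1}$ to generate the additive constant $\tfrac{7}{2}$ and bounding the remaining tail by an integral to produce the logarithm, absorbing all finite-$N$ corrections into $C_N=\mathcal O(N^{-1})$.

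\textit{Main obstacle.} The principal difficulty will be controlling the remainder uniformly in $N$: one must bound the quadratic and higher-order cross-terms arising from $\prod_j(1+y_j)^{-1}$ and their accumulation across the outer sum over $k$ so that they fit into the claimed $\mathcal O(N/\rratio^2)$ error. A secondary subtlety is to verify that the residual $C_N$ genuinely decays as $N^{-1}$, which demands a careful matching of the remaining harmonic partial sums to their integral counterparts.
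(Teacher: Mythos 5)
Your Part 1 is fine, and is essentially all the paper offers for the first inequality (the paper just "takes the limit" in Theorem~\ref{thm:SIS:bound}; your spectral justification that $e^{(A-2\lr I)t}\to0$ is the right way to make it rigorous). Your reduction in Part 2 is also correct: with $u=(2\lr I-A)^{-1}\mathbf{e_1}$ and $w_1=0$ one indeed gets $1-w^TAu=\frac{N-\condE{X_\xi}{X_0=1}}{N-1}$ with $\xi\sim\mathrm{Exp}(2\lr)$, and this probabilistic reformulation is genuinely different from the paper's route, which instead expands the resolvent $\prt{\lclr^{-1}(2I-R)-C}^{-1}$ as a matrix Laurent series $X\lclr+Y+Z\lclr^{-1}+\mathcal O(\lclr^{-2})$ (via degree-counting on determinants) and identifies the coefficient matrices entry by entry before applying the same sum-to-integral (trapezoid) step you use at the end.

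The genuine gap is the stochastic-domination step: it is valid but provably too lossy to yield the stated constant. By deleting the death terms you make the comparison chain absorb at state $N$, so you discard the quasi-stationary deficit of the true chain: once it is near $N$, replacements occur at rate $(N-1)\lr$ and each takes time of order $\tfrac{N-1}{(N-1)\lc}$ to repair, which contributes an additional $\approx\tfrac{N-1}{\rratio}$ to $N-\condE{X_\xi}{X_0=1}$, i.e.\ an additional $\rratio^{-1}$ to $E:=1-w^TA(2\lr I-A)^{-1}\mathbf{e_1}$. Concretely, the exact first-order coefficient is $\rratio E=2H_{N-1}+1+o(1)$ (this is what the paper's expression $\sum_{k=2}^{N-2}\tfrac{N}{k(N-k)}+\tfrac{3N-1}{N-1}$ evaluates to), whereas your pure-birth bound gives only $2H_{N-1}$; the case $N=2$ makes this explicit, since there $E=\tfrac{3}{3+\rratio}$ exactly while the pure-birth argument yields only $E\geq\tfrac{2}{2+\rratio}$. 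After the trapezoid step, that missing $+1$ is exactly the difference between the stated $\tfrac72$ and the $\tfrac52$ your argument would produce, so you prove a correct but strictly weaker bound, not the corollary as stated. To recover $\tfrac72$ you must retain the first-order effect of the deaths---either by computing $\condE{X_\xi}{X_0=1}$ to order $\rratio^{-1}$ for the full birth--death chain (a perturbation of its resolvent rather than outright deletion of the death rates), or by following the paper's identification of the $Y$ and $Z$ coefficient matrices, which is precisely where that contribution enters.
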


Expression \eqref{eq:cor:SIS:XYZ_bound} is an analytical approximation of \eqref{eq:cor:SIS:symmetric_steadyState:symmetry_steadyState}, proved in Appendix~\ref{sec:Annex:proof_Corollary_XYZ}.
It gets more accurate as the rate ratio $\rratio = \lc/\lr$ increases, and as the size of the system $N$ decreases.
Ultimately, as the approximation gets more accurate, it becomes a lower bound on \eqref{eq:cor:SIS:symmetric_steadyState:symmetry_steadyState}, and thus a valid bound for the performance of algorithms implementable with the Gossip model.

\begin{figure}
    \centering
    \includegraphics[width=0.5\textwidth,clip = true, trim=0.5cm 10.25cm 0cm 10.5cm,keepaspectratio]{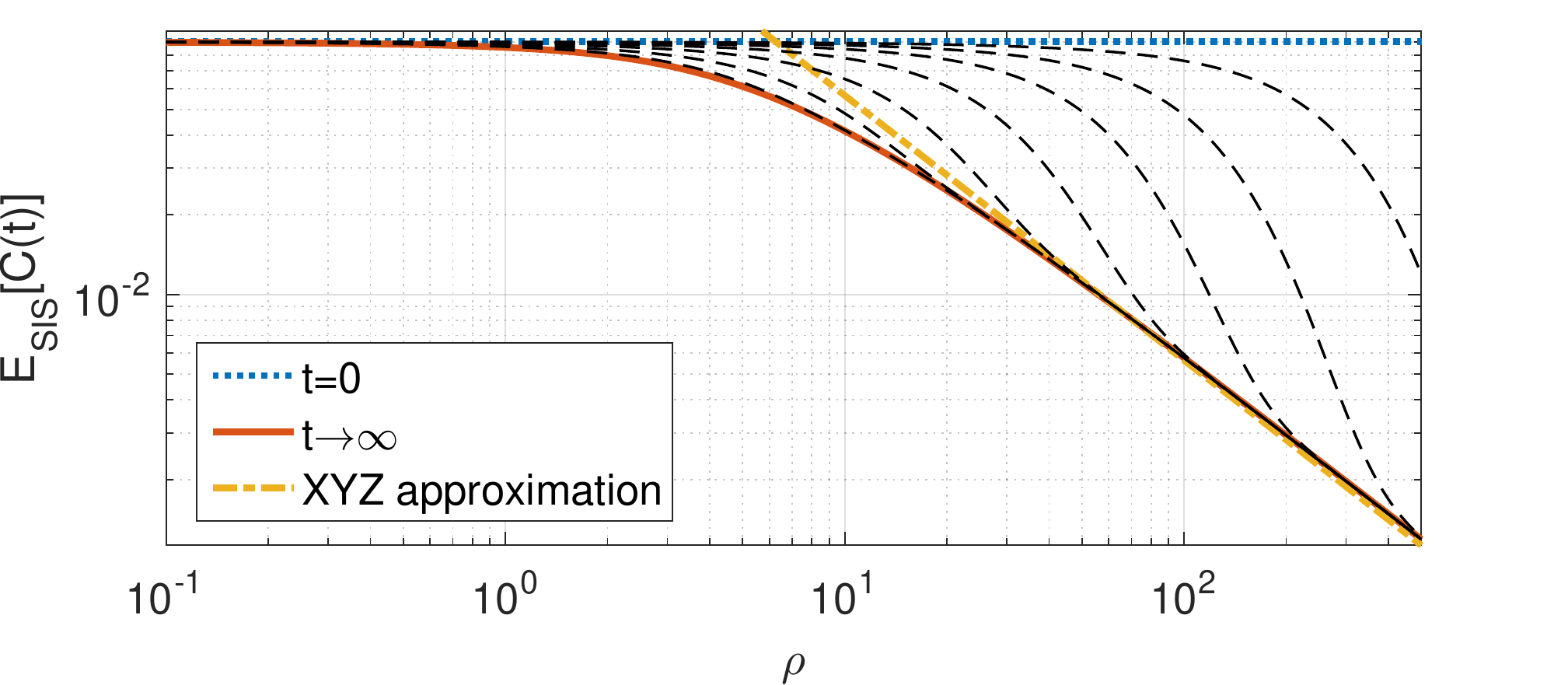}
    \caption{Time-dependent bound derived with the Gossip model (\ref{eq:thm:SIS:bound}) with 10 agents and with $\sigma^2=1$. The dotted blue line is the MSE at the initialization of the system, the plain red line is bound (\ref{eq:cor:SIS:symmetric_steadyState:symmetry_steadyState}) obtained when $t\rightarrow\infty$, the thick yellow dash-dotted line is the approximation (\ref{eq:cor:SIS:XYZ_bound}) of that bound, and the black dashed lines are the bounds at some times in between.} 
    \label{fig:SIS:TimeDepBound_XYZ_200}
\end{figure}

Fig.~\ref{fig:SIS:TimeDepBound_XYZ_200} shows the bound (\ref{eq:thm:SIS:bound}) in terms of the rate ratio $\rratio = \lc/\lr$ (\textit{i.e.} the expected number of interactions experienced by an agent before leaving the system) in the same settings as for the Ping model: 10 agents at different times from initialization until convergence to a time-independent bound as $t\rightarrow\infty$.
The same preliminary observations as for the Ping model hold.
The MSE is $\frac{N-1}{N^2}\sigma^2$ at the initialization, and remains around that value for small values of $\rho$. 
By contrast, the MSE decays to zero as $\rratio$ gets large and as the behavior of the system becomes that of a closed system.

However, in opposition with what was observed with the Ping model, the system size has here also an impact on the decay of the MSE.
This can be due to the propagation time of the information within the system, and at some extent to the memory losses at replacements, which are considered with the Gossip model and not with the Ping model.
Hence, the bound tends to be less conservative than with the Ping model.

We also show the approximation \eqref{eq:cor:SIS:XYZ_bound} of the bound from Corollary~\ref{cor:SIS:steadyState}, that is expected to get more accurate for large values of $\lc/\lr$ when $t\rightarrow\infty$.
With $10$ agents, it appears that it starts getting satisfyingly accurate around $\lc/\lr \approx 50$.

\subsection{Performance analysis of the Gossip algorithm}
\label{Sec:App:Gossip}
We now compare the bounds we derived in the two previous sections with the performance of a specific algorithm, namely the Gossip algorithm \cite{Avg:Gossip}: an agent $i$ takes its own value $x_i$ as initial estimate $y_i(t)$ at its arrival in the system, and
each time two agents $i$ and $j$ interact, their estimates update as
\begin{equation}
    \label{eq:App:Gossip:Gossip_interaction}
    y_i(t^+) = y_j(t^+) = \frac{y_i(t^-)+y_j(t^-)}{2}.
\end{equation}

Since the Gossip algorithm rely on pairwise interactions, it can by definition be implemented on $\GossipKset{i}(t)$.
Similarly, the Gossip algorithm can be implemented on one instantiation of the Ping model.
Let us imagine that whenever two agents interact, they both perform a Ping update instead of performing a pairwise exchange of information; this would allow them to implement the Gossip algorithm.
Moreover, it corresponds to an instance of the Ping model with a specific dependency between the Poisson clocks driving the Ping updates, which is allowed by Definition~\ref{def:App:Ping:model}.
In particular, the Ping update rate would be $\lambda_p=\lambda_c$.
This can be shown using Lemma~\ref{lem:Stt:Kset:Kset_Inclusion}.
The bounds derived with those models are thus valid for the performance of the Gossip algorithm.
To avoid confusion, we refer to the second bound (\textit{i.e.}, pairwise interactions model) as the \quotes{\emph{SIS bound}} in the remainder of this section.

\begin{figure}
    \centering
    \includegraphics[width=0.5\textwidth,clip = true, trim=4.5cm 6.75cm 5cm 6.5cm,keepaspectratio]{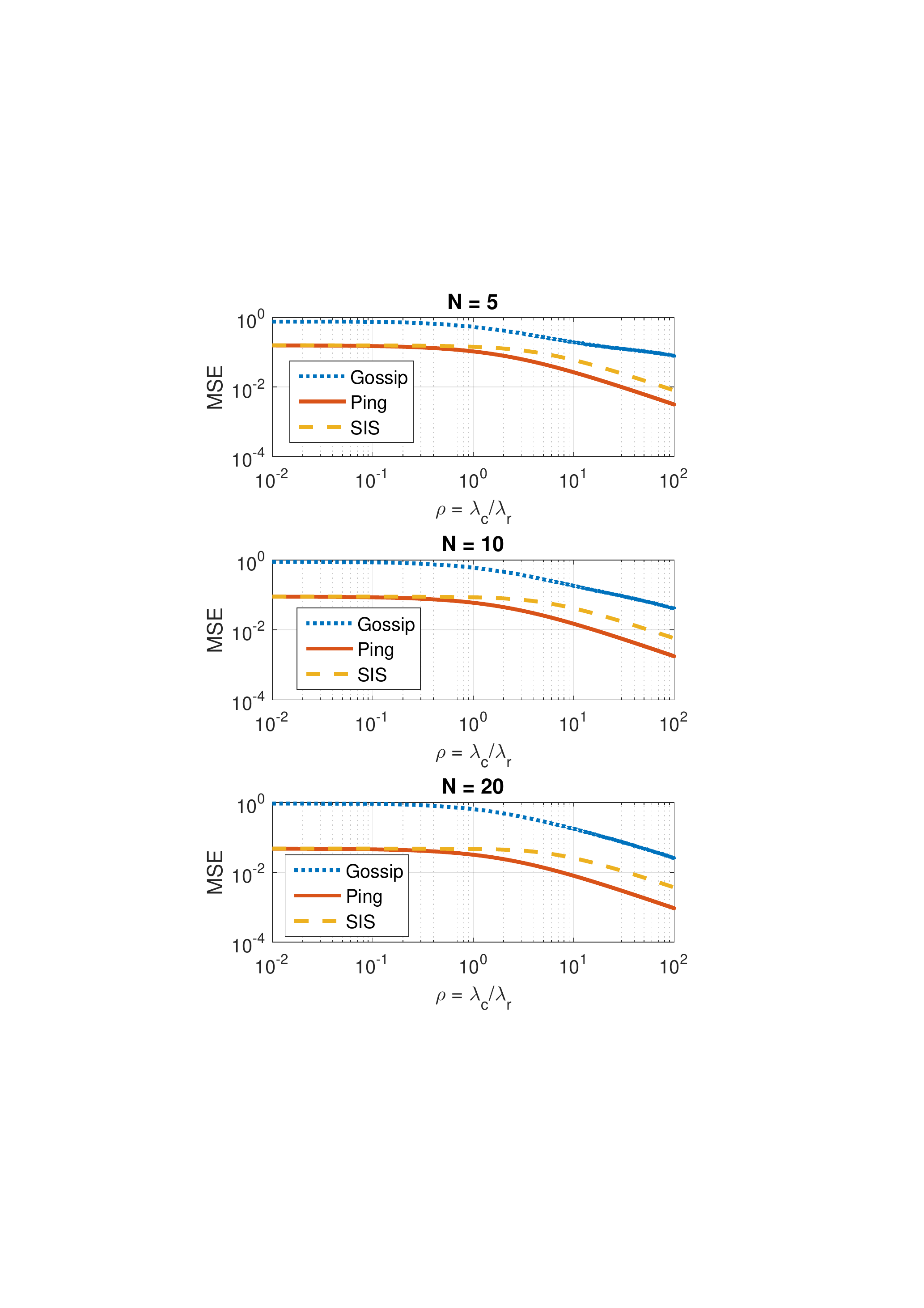}
    \caption{Performance of the Gossip algorithm (in dotted blue line) obtained through simulations compared with both the Ping bound (in plain red line) and the SIS bound (in yellow dashed line), for several values of the system size $N$. The simulation was performed over $200$ events (either replacements or communications) for $500$ realizations.}
    \label{fig:Gossip:N5N10N20}
\end{figure}

In Fig.~\ref{fig:Gossip:N5N10N20}, we compare both the asymptotic Ping and SIS bounds with the asymptotic performance of the simulated Gossip algorithm, respectively for $5$, $10$ and $20$ agents.
The bounds that are depicted are thus the bounds (\ref{eq:cor:Ping:symmetric_steadyState:symmetry_steadyState}) and \eqref{eq:cor:SIS:symmetric_steadyState:symmetry_steadyState}, respectively from Corollaries~\ref{cor:Ping:symmetric_steadyState} and \ref{cor:SIS:steadyState}.

As expected from the derivation of the bounds, it appears that the SIS bound \eqref{eq:thm:SIS:bound} is tighter than the Ping bound \eqref{eq:thm:Ping:bound}, since it is obtained with a more constrained interaction model.
More precisely, the SIS bound stands for the exact interaction model on which the Gossip algorithm is defined.
It is thus closer to the actual performance of the Gossip algorithm than the Ping bound.

Nevertheless, there is still an important gap between the performance depicted by both bounds and that observed for the simulated Gossip algorithm.
Interestingly, the size of the system has barely no impact on the performance of the Gossip algorithm, especially small values of $\rratio=\lc/\lr$.
In opposition, the bounds depict a smaller MSE as the system size increases.
This is highlighted by Fig.~\ref{fig:SISnGossip_wrtN}, which shows the SIS bound and the performance of the Gossip algorithm with respect to the size of the system $N$ for several values of $\rratio$.
This gives insight on how knowing the size of the system (which is assumed to be the case for the bounds, but is not for the Gossip algorithm) can impact the performance of algorithms.
Indeed, the MSE depicted by the bounds is scaled using the size of the system, whereas it has no influence on the MSE of the Gossip algorithm.

\begin{figure}
    \centering
    \includegraphics[width=0.5\textwidth]{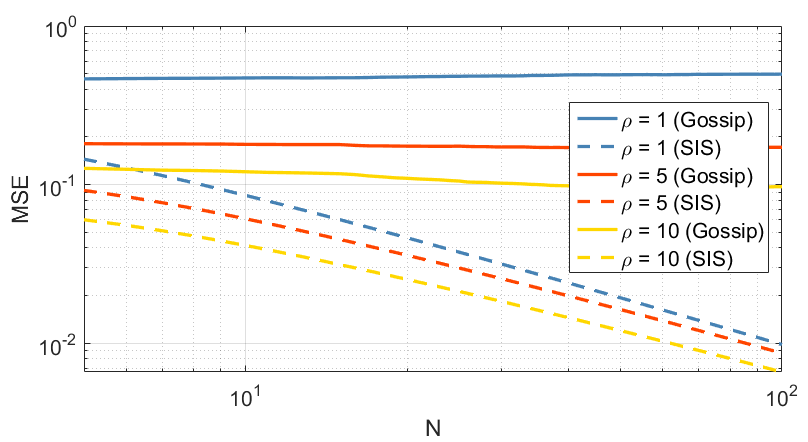}
    \caption{Performance of the Gossip algorithm (in plain line) and of the SIS bound \eqref{eq:thm:SIS:bound} (in dashed line) with respect to the size of the system $N$ for several values of the rate ratio $\rho$.}
    \label{fig:SISnGossip_wrtN}
\end{figure}

Nonetheless, the behavior of the performance of the Gossip algorithm is qualitatively well captured by both bounds.
This is surprising, especially because the Gossip algorithm is a particularly naive algorithm: it only relies on one variable, and does not make use of any identifier nor any of the information related to the openness of the system or to the distribution of the intrinsic values of the agents, by contrast with our bounds.
This questions the interest of such parameters in the design of efficient algorithms, and the exact impact they have on their performance.

\section{Conclusion}
\label{Sec:Ccl}
In this work, we analyzed open multi-agent systems, where agents can join and leave the system during the process.
We highlighted several challenges arising from that property, in particular the impossibility for algorithms to converge due to the variations of size, state and objective experienced by the system, and to the time it takes for information to propagate.

We focused on the derivation of lower bounds on the performance of algorithms trying to achieve average consensus in open multi-agent systems of constant size.
Those were obtained by analyzing the Mean Squared Error of an algorithm that achieves provably optimal performance in our setting.
We defined \textit{knowledge sets} to manage the information made potentially available to an agent given a rule defining the interactions happening in the system.
It allowed deriving a general lower bound that can be instantiated by defining that rule.
We finally applied our result to derive two lower bounds that are valid for the Gossip algorithm, to analyze the gap between our bounds and the actual performance of an algorithm that does not make use of most of the knowledge on which our bounds rely nor on identifiers.

In this work, we focused on properly defining and handling the information in the system, which we consider as one of the big challenges with open systems.
Another important challenge consists in considering systems of variable size; this actually relies on additional tools and is left for future research.

This study aimed at setting up tools and a methodology for studying open systems in general, and those can thus be applied to more complex objectives, such as decentralized optimization \cite{DO:dualAvg} or formation control \cite{ApMas:Consensus_based_formation_control:CDC2019}, or to more structured or constrained interaction rules.
Investigating the impact of some parameters assumed to be known in our model on the actual performance of algorithms is also an interesting follow-up; in particular the impact of identifiers is questionable considering the performance of our bounds as compared to that of anonymous Gossip interactions.
Finally, extending our analysis to continuous-time algorithms reveals to be challenging, as our derivation relies on essentially discrete-time ideas.

\bibliographystyle{IEEEtran}
\bibliography{ITAC_FPL.bib}

\begin{thebibliography}{10}
\providecommand{\url}[1]{#1}
\csname url@samestyle\endcsname
\providecommand{\newblock}{\relax}
\providecommand{\bibinfo}[2]{#2}
\providecommand{\BIBentrySTDinterwordspacing}{\spaceskip=0pt\relax}
\providecommand{\BIBentryALTinterwordstretchfactor}{4}
\providecommand{\BIBentryALTinterwordspacing}{\spaceskip=\fontdimen2\font plus
\BIBentryALTinterwordstretchfactor\fontdimen3\font minus
  \fontdimen4\font\relax}
\providecommand{\BIBforeignlanguage}[2]{{%
\expandafter\ifx\csname l@#1\endcsname\relax
\typeout{** WARNING: IEEEtran.bst: No hyphenation pattern has been}%
\typeout{** loaded for the language `#1'. Using the pattern for}%
\typeout{** the default language instead.}%
\else
\language=\csname l@#1\endcsname
\fi
#2}}
\providecommand{\BIBdecl}{\relax}
\BIBdecl

\bibitem{ApMAS:WSN}
J.~Predd, S.~Kulkarni, and H.~V. Poor, ``Distributed learning in wireless
  sensor networks,'' \emph{Signal Processing Magazine, IEEE}, 2006.

\bibitem{ApMAS:TrackingWSN}
R.~Olfati-Saber and N.~F.~Sandell, ``Distributed tracking in sensor networks
  with limited sensing range,'' in \emph{2008 American Control Conference}, 07
  2008, pp. 3157 -- 3162.

\bibitem{ApMAS:vehicle}
Y.~Cao, W.~Yu, W.~Ren, and G.~Chen, ``An overview of recent progress in the
  study of distributed multi-agent coordination,'' \emph{IEEE Transactions on
  Industrial Informatics}, vol.~9, 07 2012.

\bibitem{ApMas:Consensus_based_formation_control:CDC2019}
F.~{Molinari} and J.~{Raisch}, ``Efficient consensus-based formation control
  with discrete-time broadcast updates,'' in \emph{2019 IEEE 58th Conference on
  Decision and Control (CDC)}, Dec 2019, pp. 4172--4177.

\bibitem{ApMAS:CTAvgOpinionDynamics}
V.~Blondel, J.~M.~Hendrickx, and J.~Tsitsiklis, ``Continuous-time
  average-preserving opinion dynamics with opinion-dependent communications,''
  \emph{SIAM Journal on Control and Optimization}, vol.~48, 07 2009.

\bibitem{ApMAS:Opinion&Confidence}
R.~Hegselmann and U.~Krause, ``Opinion dynamics and bounded confidence models,
  analysis and simulation,'' \emph{Journal of Artificial Societies and Social
  Simulation}, vol.~5, 07 2002.

\bibitem{OMAS:Gossip:DeterministicAllerton}
J.~M. {Hendrickx} and S.~{Martin}, ``Open multi-agent systems: Gossiping with
  deterministic arrivals and departures,'' in \emph{2016 54th Annual Allerton
  Conference on Communication, Control, and Computing (Allerton)}, 2016, pp.
  1094--1101.

\bibitem{OMAS:Gossip:RandomCDC}
------, ``Open multi-agent systems: Gossiping with random arrivals and
  departures,'' in \emph{2017 IEEE 56th Annual Conference on Decision and
  Control (CDC)}, 2017, pp. 763--768.

\bibitem{OMAS:MAX}
M.~Abdelrahim, J.~M. Hendrickx, and W.~M. Heemels, ``Max-consensus in open
  multi-agent systems with gossip interactions,'' in \emph{Proceedings of the
  56th IEEE CDC}, 12 2017, pp. 4753--4758.

\bibitem{ApOMAS:OpenP2P}
A.~Giret, V.~Julián, M.~Rebollo, E.~Argente, C.~Carrascosa, and V.~Botti, ``An
  open architecture for service-oriented virtual organizations,'' in
  \emph{Lecture Notes in Computer Science, vol 5919}, 09 2010, pp. 118--132.

\bibitem{ApOMAS:VTL}
O.~K. Tonguz, ``Red light, green light—no light: Tomorrow's communicative
  cars could take turns at intersections,'' \emph{IEEE Spectrum}, vol.~55,
  no.~10, pp. 24--29, Oct 2018.

\bibitem{OMAS:sociophysics}
P.~Sen and B.~K. Chakrabarti, \emph{Sociophysics: an introduction}.\hskip 1em
  plus 0.5em minus 0.4em\relax Oxford University Press, 2013.

\bibitem{OMAS:PnP:DecentralizedMPC}
S.~{Riverso}, M.~{Farina}, and G.~{Ferrari-Trecate}, ``Plug-and-play
  decentralized model predictive control,'' in \emph{2012 IEEE 51st IEEE
  Conference on Decision and Control (CDC)}, 2012, pp. 4193--4198.

\bibitem{OMAS:PnP:PartisionBasedDistrKalman}
M.~{Farina} and R.~{Carli}, ``Partition-based distributed kalman filter with
  plug and play features,'' \emph{IEEE Transactions on Control of Network
  Systems}, vol.~5, no.~1, pp. 560--570, 2018.

\bibitem{OMAS:ImpactOfNoise_RandomConsensusAlgo}
R.~{Vizuete}, P.~{Frasca}, and E.~{Panteley}, ``On the influence of noise in
  randomized consensus algorithms,'' \emph{IEEE Control Systems Letters}, 2021.

\bibitem{OMAS:Median_consensus:CDC2019}
Z.~A.~Z. {Sanai Dashti}, C.~{Seatzu}, and M.~{Franceschelli}, ``Dynamic
  consensus on the median value in open multi-agent systems,'' in \emph{2019
  IEEE 58th Conference on Decision and Control (CDC)}, Dec 2019.

\bibitem{OMAS:OpenDynamicConsensus:Franseschelli-Frasca}
\BIBentryALTinterwordspacing
M.~Franceschelli and P.~Frasca, ``{Proportional Dynamic Consensus in Open
  Multi-Agent Systems},'' in \emph{{57th IEEE Conference on Decision and
  Control (CDC 2018)}}, Miami, FL, United States, Dec. 2018, pp. 900--905.
  [Online]. Available: \url{https://hal.archives-ouvertes.fr/hal-01945840}
\BIBentrySTDinterwordspacing

\bibitem{OMAS:DynamicMaxMinSizeEstimation}
D.~Deplano, M.~Franceschelli, and A.~Giua, ``Dynamic max-consensus and size
  estimation of anonymous multi-agent networks,'' \emph{ArXiv}, vol.
  abs/2009.03858, 2020.

\bibitem{DO:online-varyingFunctions}
E.~Hazan, ``Introduction to online convex optimization,'' \emph{Foundations and
  Trends in Optimization}, vol.~2, pp. 157--325, 01 2016.

\bibitem{OpenDo:OpenDGDStability}
J.~M. Hendrickx and M.~G. Rabbat, ``Stability of decentralized gradient descent
  in open multi-agent systems,'' in \emph{59th {IEEE} Conference on Decision
  and Control, {CDC}}, 2020, pp. 4885--4890.

\bibitem{Avg:Gossip}
\BIBentryALTinterwordspacing
S.~Boyd, A.~Ghosh, B.~Prabhakar, and D.~Shah, ``Randomized gossip algorithms,''
  \emph{IEEE/ACM Trans. Netw.}, vol.~14, no.~SI, pp. 2508--2530, Jun. 2006.
  [Online]. Available: \url{https://doi.org/10.1109/TIT.2006.874516}
\BIBentrySTDinterwordspacing

\bibitem{OMAS:CDC2019:FPL_intrAVG}
C.~{Monnoyer de Galland} and J.~M. {Hendrickx}, ``Lower bound performances for
  average consensus in open multi-agent systems,'' in \emph{2019 IEEE 58th
  Conference on Decision and Control (CDC)}, Dec 2019.

\bibitem{DO:dualAvg}
J.~C.~Duchi, A.~Agarwal, and M.~J.~Wainwright, ``Dual averaging for distributed
  optimization: Convergence analysis and network scaling,'' \emph{IEEE
  Transactions on Automatic Control}, vol.~57, 05 2010.

\bibitem{MISC:ViewsInAGraph_Part1}
M.~{Yamashita} and T.~{Kameda}, ``Computing on anonymous networks. i.
  characterizing the solvable cases,'' \emph{IEEE Transactions on Parallel and
  Distributed Systems}, vol.~7, no.~1, pp. 69--89, Jan 1996.

\bibitem{MISC:ViewsInAGraph_Julien}
J.~Hendrickx, ``Views in a graph: To which depth must equality be checked?''
  \emph{Parallel and Distributed Systems, IEEE Transactions on}, vol.~25, pp.
  1907--1912, 07 2014.

\bibitem{MISC:Statistics}
K.~Park, \emph{Fundamentals of Probability and Stochastic Processes with
  Applications to Communications}, 01 2018.

\bibitem{MISC:UsualStochasticOrder}
A.~M\"uller and D.~Stoyan, \emph{\BIBforeignlanguage{English}{Comparison
  methods for stochastic models and risks}}.\hskip 1em plus 0.5em minus
  0.4em\relax Chichester ; New York, NY : Wiley, 2002, includes bibliographical
  references (p. [303]-323) and index.

\bibitem{MISC:Epidemiology}
F.~Brauer and C.~Castillo-Chávez, \emph{Mathematical Models in Population
  Biology and Epidemiology}, 01 2001, vol.~40.

\end{thebibliography}

\appendix
\subsection{Proof of Lemma~\ref{lem:General:single_error:MSE_RV}}
\label{sec:Annex:proof_Lemma_MSE_RV}

\begin{proof}
Let $Y$ and $Z$ be two i.i.d. zero mean random variables of variance $\sigma^2$, and define another random variable $X$ as follows
\begin{equation}
    \label{Annex:eq:lem:General:single_error:MSE_RV:rv_def}
    X := 
    \begin{cases}
        Z &\hbox{with probability } p\\
        Y &\hbox{with probability } 1-p
    \end{cases},
\end{equation}
with the event related to the probability $p$ independent of both $Z$ and $Y$.

By definition of $X$, the Mean Squared Error of an estimator $\hat X$ of $X$ provided $Z$ is given by the following
\begin{align*}
    &\condE{\prt{X-\hat X}^2}{Z} \\
    &\ \ \ = \condE{\prt{Z-\hat X}^2}{Z}p+\condE{\prt{Y-\hat X}^2}{Z}(1-p)\\
    &\ \ \ = p\prt{Z-\hat X}^2 + (1-p)  \hat X^2 + (1-p) \sigma^2\\
    &\ \ \ = pZ^2 - 2pZ\hat X +\hat X^2 +(1-p)\sigma^2.
\end{align*}
The random variable $Y$ is eliminated in the development above by using the fact that it is independent of $Z$, and that $\Ep{Y} = 0$ and $\Ep{Y^2} = \sigma^2$ by definition.

That estimator is thus optimal if
\begin{align*}
    \frac{d}{d\hat X}\condE{\prt{X-\hat X}^2}{Z}
    &= \frac{d}{d\hat X} \prt{pZ^2 - 2pZ\hat X +\hat X^2 +(1-p)\sigma^2}\\
    &= -2pZ+2\hat X = 0\\
    \Leftrightarrow \hat X &= pZ,
\end{align*}
which leads to the first result stated in (\ref{eq:lem:General:single_error:MSE_RV:estimate}).
    
The MSE for that estimator $\hat X$ provided $Z$ is then given by
\begin{align*}
    \condE{\prt{X-\hat X}^2}{Z}
    &= pZ^2 -2p^2Z^2 + p^2Z^2 + (1-p)\sigma^2\\
    &= p(1-p)Z^2 + (1-p)\sigma^2.
\end{align*}
 
This finally leads to 
\begin{align*}
    \Ep{\prt{X-\hat X}^2}
    &= p(1-p)\Ep{Z^2} + (1-p)\sigma^2\\
    &= (1-p)(1+p)\sigma^2 \\
    &= (1-p^2)\sigma^2,
\end{align*}
which concludes the proof.
\end{proof}

\subsection{Proof of Corollary~\ref{cor:SIS:steadyState}}
\label{sec:Annex:proof_Corollary_XYZ}
We first prove a few preliminary results that will allow us to build the proof.
Note that for this proof, we use the notation $*$ to refer to a whole row or column of a matrix (\textit{e.g.} $\Melem A i*$ stands for the whole $i$-th row of $A$).

We start from expression \eqref{eq:cor:SIS:symmetric_steadyState:symmetry_steadyState} that is directly obtained from taking the limit as $t\to\infty$ of \eqref{eq:thm:SIS:bound} from Theorem~\ref{thm:SIS:bound}.

\begin{equation}
    \label{eq:Annex:SIS:E}
    E := 1-w^TA\prt{2\lr-A}^{-1}\mathbf e_1,
\end{equation}
such that expression (\ref{eq:cor:SIS:symmetric_steadyState:symmetry_steadyState}) is equivalent to $\tfrac{N-1}{N}\sigma^2E$.

\begin{lem}
    \label{lem:Annex:SIS_rho_expression}
    The expression (\ref{eq:Annex:SIS:E}) is equivalent to the following:
    \begin{equation}
        \label{eq:lem:Annex:SIS_rho_expression}
        E = 1-w^T\prt{v-w\lclr^{-1}}^T\prt{\lclr^{-1}(2I-R)-C}^{-1}\mathbf e_1,
    \end{equation}
    where $\Velem{v}{k} = \tfrac{k(N-k)}{N-1}$, $\lclr := \frac{1}{N-1}\frac\lc\lr$, and such that $C\lclr+R = \tfrac{1}{\lr}A$.
\end{lem}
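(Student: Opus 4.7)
The plan is to rewrite $E = 1 - w^T A(2\lr - A)^{-1}\mathbf{e_1}$ from \eqref{eq:Annex:SIS:E} by extracting the $\lr$ and $\lclr$ factors from $A$ using the decomposition $A = \lr(R + C\lclr)$ that is encoded in the definitions of $R$, $C$, and $\lclr$.

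First I would simplify the resolvent. Starting from $C\lclr + R = \lr^{-1}A$, one gets
\begin{equation*}
    2\lr - A \;=\; \lr\bigprt{2I - R - C\lclr} \;=\; \lr\lclr\,\bigprt{\lclr^{-1}(2I - R) - C},
\end{equation*}
so $(2\lr - A)^{-1} = (\lr\lclr)^{-1} M^{-1}$ with $M := \lclr^{-1}(2I-R) - C$. This already produces the inverse matrix appearing in the target expression \eqref{eq:lem:Annex:SIS_rho_expression}.

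Second, I would evaluate the row vector $w^T A$ entry by entry from the tridiagonal entries of Proposition~\ref{prop:SIS:fji}. For a generic column index $j$,
\begin{equation*}
    [w^T A]_j \;=\; w_{j-1}[A]_{j-1,j} + w_j[A]_{j,j} + w_{j+1}[A]_{j+1,j},
\end{equation*}
with the obvious omission of terms at the boundaries $j \in \{1,N\}$. Plugging in $w_k = (k-1)/(N-1)$ and the explicit entries of $A$, the $\lr$ contributions telescope to $-w_j\lr$, while the $\lc$ contributions combine into $\frac{j(N-j)}{(N-1)^2}\lc$, which equals $v_j\,\lr\lclr$ after using the identity $\lc/(N-1) = \lr\lclr$ and the definition $v_j = j(N-j)/(N-1)$. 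The outcome is $w^T A = \lr(\lclr v - w)^T$. I would verify the two boundary cases $j = 1$ and $j = N$ explicitly, since these are the only places where the generic three-term formula might differ.

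Combining the two ingredients yields
\begin{equation*}
    w^T A(2\lr - A)^{-1} \;=\; \lr(\lclr v - w)^T \cdot (\lr\lclr)^{-1} M^{-1} \;=\; (v - \lclr^{-1} w)^T M^{-1},
\end{equation*}
and substituting back into \eqref{eq:Annex:SIS:E} delivers the claim. The only nontrivial step is the tridiagonal multiplication in step two, as everything else is pure bookkeeping built on the definitions of $R$, $C$, $v$, and $\lclr$; that step is where I expect the main (though still routine) work to lie.
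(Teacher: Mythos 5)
Your proposal is correct and follows essentially the same route as the paper: both factor $A=\lr(C\lclr+R)$, pull $\lr\lclr$ out of the resolvent to produce $\prt{\lclr^{-1}(2I-R)-C}^{-1}$, and reduce the left factor via $w^TC=v^T$ and $w^TR=-w^T$ (which you obtain in one combined tridiagonal computation $w^TA=\lr(\lclr v-w)^T$ rather than term by term, including the correct boundary checks at $j=1$ and $j=N$).
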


\begin{proof}
    Denote $\tilde A := A/\lr$, then one has
    \begin{equation*}
        E = 1 - w^T \tilde A \prt{2I-\tilde A}^{-1}\mathbf e_1.
    \end{equation*}
    Moreover, by definition of the matrix $A$, there holds that $$\tilde A = C\lclr+R,$$ where the matrices $C$ and $R$ are bidiagonal matrices defined as follows: 
    \begin{align*}
        &\Melem{C}{k}{k} = -\Melem{C}{k+1}{k} = -(N-k)& &(1\leq k \leq N-1);\\
        &\Melem{R}{k}{k} = -\Melem{R}{k-1}{k} = -(k-1)& &(2\leq k \leq N).
    \end{align*}

    Hence, one has 
    \begin{align*}
        E 
        &= 1-w^T\prt{C\lclr+R}\prt{2I-C\lclr-R}^{-1}\mathbf e_1\\
        &= 1-w^T\prt{C+R\lclr^{-1}}\prt{\lclr^{-1}(2I-R)-C}^{-1}\mathbf e_1.
    \end{align*}

    Moreover, we define the vector $v$ as follows
    \begin{align*}
        \Velem{v}{k} 
        &= \Velem{w^TC}{k} 
        = w^T\Melem{C}{*}{k} 
        = \Velem{v}{k} \Melem{C}{k}{k} + \Velem{v}{k+1} \Melem{C}{k+1}{k}\\
        &= \tfrac{k-1}{N-1}\bigprt{-k(N-k)} + \tfrac{k}{N-1}\bigprt{k(N-k)}
        = \tfrac{k(N-k)}{N-1},
    \end{align*}

    and we observe that $w^TR = -w^T$:
    \begin{align*}
        \Velem{w^TR}{k}
        &= w^T\Melem{R}{*}{k}
        = \Velem{v}{k} \Melem{R}{k}{k} + \Velem{v}{k-1} \Melem{R}{k-1}{k}\\
        &= -\tfrac{k-1}{N-1}(k-1) + \tfrac{k-2}{N-1}(k-1) = -\tfrac{k-1}{N-1} = \Velem{w}{k}.
    \end{align*}

    Finally the conclusion follows
    \begin{align*}
        E 
        &= 1-\prt{w^TC+w^TR\lclr^{-1}}\prt{\lclr^{-1}(2I-R)-C}^{-1}\mathbf e_1\\
        &= 1-\prt{v^T-w^T\lclr^{-1}}\prt{\lclr^{-1}(2I-R)-C}^{-1}\mathbf e_1.
    \end{align*}
\end{proof}

\begin{prop}
    \label{prop:Annex:SIS:XYZ_assumption}
    Considering expression (\ref{eq:lem:Annex:SIS_rho_expression}) from Lemma~\ref{lem:Annex:SIS_rho_expression}, there holds
    \begin{equation}
        \label{eq:prop:Annex:SIS:XYZ_assumption}
        \prt{\lclr^{-1}(2I-R)-C}^{-1} = X\lclr + Y + Z\lclr^{-1} +\mathcal O\prt{\lclr^{-2}},
    \end{equation}
    for some matrices $X, Y, Z \in \R^{N\times N}$.
\end{prop}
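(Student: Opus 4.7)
My plan is to substitute $\epsilon := \lclr^{-1}$ and reformulate the claim as saying that the matrix $M(\epsilon) := \epsilon(2I-R) - C$ has an inverse that, on a neighbourhood of $\epsilon = 0$, admits a Laurent expansion with at most a simple pole, namely $M(\epsilon)^{-1} = X\epsilon^{-1} + Y + Z\epsilon + \mathcal{O}(\epsilon^2)$. Substituting back $\epsilon = \lclr^{-1}$ then gives exactly \eqref{eq:prop:Annex:SIS:XYZ_assumption}, so the whole task reduces to establishing this Laurent expansion.

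First I would identify the source of the singularity. From the explicit bidiagonal definition of $C$ given in the proof of Lemma~\ref{lem:Annex:SIS_rho_expression}, the diagonal entry $\Melem{C}{N}{N}$ and the off-diagonal below it are both absent (the range is $1\leq k\leq N-1$), so the entire $N$-th column of $C$ vanishes and $C$ has rank $N-1$, with $C\mathbf{e}_N = 0$. On the other hand, the $(N-1)\times(N-1)$ top-left block $C_{11}$ is lower bidiagonal with nonzero diagonal entries, hence invertible.

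I would then partition $M(\epsilon)$ conformably by separating the last row and column:
\begin{equation*}
M(\epsilon) = \begin{pmatrix} \epsilon B_{11} - C_{11} & \epsilon b_{12} \\ \epsilon b_{21}^T - c_{21}^T & \epsilon b_{22} \end{pmatrix},
\end{equation*}
where $B_{11}, b_{12}, b_{21}, b_{22}$ are the blocks of $B := 2I - R$ and $c_{21}^T$ is the first $N-1$ entries of the $N$-th row of $C$. Because $C_{11}$ is invertible, $\epsilon B_{11} - C_{11}$ stays invertible on a neighbourhood of $\epsilon = 0$ with analytic inverse. The Schur complement of that block in $M(\epsilon)$ is $s(\epsilon) = \epsilon b_{22} - (\epsilon b_{21}^T - c_{21}^T)(\epsilon B_{11} - C_{11})^{-1}(\epsilon b_{12})$, which I would factor as $s(\epsilon) = \epsilon\, g(\epsilon)$ where $g$ is analytic at $0$ with $g(0) = b_{22} + c_{21}^T C_{11}^{-1} b_{12}$. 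Applying the standard block-inverse formula, each entry of $M(\epsilon)^{-1}$ becomes a ratio whose numerator is analytic at $0$ and whose denominator is $\epsilon\, g(\epsilon)$, which yields a Laurent expansion with at most a simple pole, provided $g(0) \neq 0$.

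The main obstacle is therefore verifying the non-degeneracy $g(0) \neq 0$. This I would handle by explicit computation: reading off $b_{22}$, $b_{12}$, and $c_{21}^T$ from the bidiagonal definitions of $B$ and $C$, and then solving the lower bidiagonal system $C_{11} x = \mathbf{e}_{N-1}$ by back-substitution to evaluate $c_{21}^T C_{11}^{-1} b_{12}$. A straightforward recursion shows that only the last component $x_{N-1}$ is nonzero, so $g(0)$ collapses to a simple polynomial expression in $N$ that is strictly positive; this closes the argument and yields the claimed expansion.
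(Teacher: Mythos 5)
Your proposal is correct, and it reaches the conclusion by a genuinely different mechanism than the paper. The paper works with $\det M$ and its cofactors directly: writing $\Melem{M^{-1}}{j}{i}=-\det(S_{ij})/\det(M)$, it proves by induction on the leading principal minors of the tridiagonal matrix $M$ that $\det(M)=\lrlc\, p(\lrlc)$ with $p(0)\neq0$ (a zero of order exactly one), and then argues that every cofactor is a polynomial with nonnegative order, so each entry of $M^{-1}$ has at most a simple pole. You instead localize the singularity structurally: $M(0)=-C$ is singular precisely because the last column of $C$ vanishes, the leading $(N-1)\times(N-1)$ block stays invertible near $\epsilon=0$, and the whole question collapses to the scalar non-degeneracy of the Schur complement, $s(\epsilon)=\epsilon\,g(\epsilon)$ with $g(0)\neq0$. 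This is equivalent to the paper's statement $q=1$, but it makes the reason for the \emph{simple} pole transparent (corank-one degeneracy of $C$ plus one transversality condition) and replaces the $N$-step determinant induction by a single explicit computation; it would also generalize to any perturbation $\epsilon B-C$ with $C$ of corank one. The paper's route, conversely, stays entirely inside the tridiagonal determinant recursion and needs no block bookkeeping. Two small points: your formula for $g(0)$ should read $b_{22}-c_{21}^{T}C_{11}^{-1}b_{12}$ (the two minus signs from $-c_{21}^{T}$ and $(-C_{11})^{-1}$ cancel), which is harmless since both sign conventions give a strictly positive value; and note that the paper's displayed definition of $C$ drops a factor $k$ on the diagonal (the determinant recursion and the computation of $v$ show the intended entries are $\Melem{C}{k}{k}=-\Melem{C}{k+1}{k}=-k(N-k)$), but the structural facts you rely on -- zero last column, invertible bidiagonal leading block -- hold either way, so your argument is unaffected.
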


\begin{proof}
    Denote $\lrlc := \lclr^{-1}$, and $M := (2I-R)\lrlc-C$ and define 
    \begin{equation*}
        S_{ij} := 
        \begin{bmatrix}
        M               &\vdots &\mathbf e_i\\
        \ldots          &       &\ldots\\ 
        \mathbf e_j^T   &\vdots &0
        \end{bmatrix}.
    \end{equation*}

    Then $-\Melem{M^{-1}}{j}{i} = -\mathbf e_j^T M \mathbf e_i$ is the Schur complement of $M$ in $S_{ij}$. 
    Hence, from the properties of the Schur complement, there holds
    \begin{equation*}
        \Melem{M^{-1}}{j}{i} = \det\prt{\mathbf e_j^T M \mathbf e_i} = -\frac{\det\prt{S_{ij}}}{\det(M)}.
    \end{equation*}

    Let $\det(M)$ and $\det(S_{ij})$ be polynomials such that
    \begin{align*}
    \label{eq:proof:prop:Annex:SIS:XYZ_assumption:q}
        \det(M) 
        &= \lrlc^q p(\lrlc)&
        &\hbox{with } p(0)\neq 0;\\
        \det(S_{ij})
        &= \lrlc^{q_{ij}} p_{ij}(\lrlc)&
        &\hbox{with } p_{ij}(0)\neq 0 \ \ \ \forall i,j.
    \end{align*}

    Hence,
    \begin{equation*}
        \Melem{M^{-1}}{j}{i} 
        = c_0 \lrlc^{q-q_{ij}} + c_1 \lrlc^{q-q_{ij}+1} + c_2 \lrlc^{q-q_{ij}+2} + \ldots;
    \end{equation*}

    and thus
    \begin{equation*}
        M^{-1}
        = X \lrlc^{q-\tilde q_{ij}} + Y \lrlc^{q-\tilde q_{ij}+1} + Z \lrlc^{q-\tilde q_{ij}+2} + \ldots,
    \end{equation*}
    with $\tilde q_{ij} := \min_{i,j} \brc{q_{ij}}$.

    By definition of $M$, and from the properties of tridiagonal matrices, there holds
    \begin{align*}
        \det(M_k) 
        = &\brk{k(N-k)+(k+1)\lrlc} \det(M_{k-1})\\
          &-(k-1)^2(N-k+1)\lrlc \det (M_{k-2}),
    \end{align*}
    with $\det(M_0)=1$ and $\det(M_{-1})=0$, and where $M_k$ denotes the $k\times k$ upper-left sub-matrix of $M$.
    We show by induction that $\det(M_k) = p_k(\lrlc)$ with $p_k(\lrlc)$ a polynomial such that $p_k(0)\neq0$ for $k<N$.
    \begin{itemize}
        \item \underline{Initial case:}\\
        $k=0 \Rightarrow \det(M_0) = 1$\\
        $k=1 \Rightarrow \det(M_1) = (N-1) + 2\lrlc$
    
        \item \underline{Assume it is true for $k-1$ and $k$:}
        \begin{align*}
            \det(M_{k+1}) 
            &= (k+1)(N-k+1)p_{k}(\lrlc)\\
            &\ \ \ + (k+2)\lrlc p_k(\lrlc)- k^2(N-k)\lrlc p_{k-1}(\lrlc)\\
            &= p_{k+1}(\lrlc) \ \ \ \ \hbox{such that } p_{k+1}(0)\neq0. 
        \end{align*}
    \end{itemize}

    Hence, one obtains the following for $k=N$:
    \begin{align*}
        \det(M_N) 
        &= (N+1)\lrlc p_{N-1}(\lrlc) - (N-1)^2\lrlc p_{N-2}(\lrlc)\\
        &= \lrlc p(\lrlc) \ \ \ \hbox{with } p(0)\neq0,
    \end{align*}
    and it follows that $q=1$ (with $q$ defined previously).

    Moreover, from the structure of $M$, the determinant of any cofactor of $M$ is a polynomial of $\lrlc$, and for all $i,j$, one has $q_{ij}\geq0$.

    Hence,
    \begin{align*}
        M^{-1}
        &= X\lrlc^{-1} + Y + Z\lrlc + \ldots\\
        &= X\lclr + Y + Z\lclr^{-1} + \mathcal O(\lclr^{-2}),
    \end{align*}
    which concludes the proof.
\end{proof}

\begin{prop}
    \label{prop:Annex:SIS:ALG_XYZbound}
    Expression (\ref{eq:Annex:SIS:E}) reduces to the following:
    \begin{equation}
        \label{eq:prop:Annex:SIS:ALG_XYZbound}
        E = \frac{\lclr^{-1}}{N-1} \brk{\sum_{k=2}^{N-2} \prt{\frac{N}{k(N-k)}} + \frac{3N-1}{N-1}} + \mathcal O \prt{\lclr^{-2}}. 
    \end{equation}
\end{prop}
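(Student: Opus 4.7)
My plan is to substitute the asymptotic expansion $M^{-1} = X\lclr + Y + Z\lclr^{-1} + \mathcal O(\lclr^{-2})$ from Proposition~\ref{prop:Annex:SIS:XYZ_assumption} into $E = 1 - \bigprt{v^T - w^T\lclr^{-1}}M^{-1}\mathbf{e_1}$ (the form established in the proof of Lemma~\ref{lem:Annex:SIS_rho_expression}) and collect powers of $\lclr$. Writing $\mathbf{x} := X\mathbf{e_1}$, $\mathbf{y} := Y\mathbf{e_1}$, $\mathbf{z} := Z\mathbf{e_1}$, one obtains
\begin{equation*}
    E = -v^T\mathbf{x}\,\lclr + \bigprt{1 - v^T\mathbf{y} + w^T\mathbf{x}} + \bigprt{w^T\mathbf{y} - v^T\mathbf{z}}\lclr^{-1} + \mathcal O(\lclr^{-2}),
\end{equation*}
so I need to verify that the first two coefficients vanish and to evaluate the third.

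Matching like powers in $MM^{-1} = I$ yields the cascade $CX = 0$, $(2I-R)X - CY = I$, and $(2I-R)Y - CZ = 0$. The last column of $C$ is zero, so $\ker C = \mathrm{span}(\mathbf{e_N})$ and $\mathbf{x} = \alpha\,\mathbf{e_N}$ for a scalar $\alpha$; the column sums of both $C$ and $R$ vanish, hence $\ONE^T C = 0$ and $\ONE^T(2I-R) = 2\ONE^T$, and the solvability of $C\mathbf{y} = (2I-R)\mathbf{x} - \mathbf{e_1}$ forces $\alpha = 1/2$. Since $v_N = 0$ and $w_N = 1$, this already gives $v^T\mathbf{x} = 0$ and $w^T\mathbf{x} = \tfrac12$, so the $\lclr$-coefficient of $E$ vanishes.

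For the lower-order coefficients I rely on the two structural identities $w^T C = v^T$ and $w^T R = -w^T$ that are proved in Lemma~\ref{lem:Annex:SIS_rho_expression}. Left-multiplying $C\mathbf{z} = (2I-R)\mathbf{y}$ by $w^T$ gives $v^T\mathbf{z} = w^T(2I-R)\mathbf{y}$, reducing the $\lclr^{-1}$-coefficient to $w^T(R - I)\mathbf{y} = -2w^T\mathbf{y}$. I then solve the bidiagonal system $C\mathbf{y} = -\mathbf{e_1} + \tfrac12\bigbrk{-(N-1)\mathbf{e_{N-1}} + (N+1)\mathbf{e_N}}$ by forward substitution, which yields $y_k = 1/(k(N-k))$ for $k = 1,\ldots, N-2$ and $y_{N-1} = (N+1)/(2(N-1))$, with $y_N$ undetermined at this order; the latter is then pinned down by the solvability of $C\mathbf{z} = (2I-R)\mathbf{y}$, which (via $\ONE^T(2I-R) = 2\ONE^T$) reduces to $\ONE^T\mathbf{y} = 0$, i.e.\ $y_N = -\sum_{k=1}^{N-1} y_k$. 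Because $v_N = 0$, a short computation gives $v^T\mathbf{y} = (N-2)/(N-1) + (N+1)/(2(N-1)) = 3/2$, and the constant coefficient $1 - v^T\mathbf{y} + w^T\mathbf{x}$ indeed vanishes.

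The remaining step is to evaluate $-2w^T\mathbf{y}$ explicitly. Using the partial fractions $1/(k(N-k)) = \tfrac1N\bigprt{1/k + 1/(N-k)}$, both $w^T\mathbf{y}$ and $\ONE^T\mathbf{y}$ reduce to truncated harmonic sums; eliminating $y_N$ through $\ONE^T\mathbf{y}=0$ and collecting the boundary contributions from $y_{N-1}$ and $y_N$, the harmonic part collects to $2(H_{N-2}-1)/(N-1) = \bigprt{\sum_{k=2}^{N-2}N/(k(N-k))}/(N-1)$, while the remaining rational terms combine into $(3N-1)/(N-1)^2$, producing exactly the bracket of \eqref{eq:prop:Annex:SIS:ALG_XYZbound}. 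The main obstacle is this closing bookkeeping: the two boundary contributions generated by $y_{N-1}$ and by the solvability constraint $\ONE^T\mathbf{y}=0$ must cancel just right with the bulk harmonic terms to deliver the precise numerator $3N-1$; once the partial-fraction split is in place this amounts to verifying a single scalar identity.
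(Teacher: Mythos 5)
Your proof is correct and arrives at the right expression, but it takes a noticeably leaner route than the paper's. The paper identifies the full matrices $X$ and $Y$ and, most laboriously, the entire first column of $Z$, using all six matching equations from both $MM^{-1}=I$ and $M^{-1}M=I$, and then evaluates $v^TZ\mathbf{e_1}$ entry by entry. You work only with the first columns $\mathbf{x},\mathbf{y},\mathbf{z}$ and only with the left cascade $CX=0$, $(2I-R)X-CY=I$, $(2I-R)Y-CZ=0$, pinning down the free components $\alpha$ and $y_N$ through the Fredholm conditions induced by $\ONE^TC=0$ (legitimate since $\mathrm{rank}\,C=N-1$, so $\mathrm{range}\,C$ is exactly $\ONE^\perp$); this replaces the paper's recourse to the right-multiplication equations. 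The decisive shortcut is $v^T\mathbf{z}=w^TC\mathbf{z}=w^T(2I-R)\mathbf{y}$, which together with $w^TR=-w^T$ collapses the whole $\lclr^{-1}$ coefficient to $-2w^T\mathbf{y}$ and removes any need to compute $Z$ at all; one checks that $-2w^T\mathbf{y}=\tfrac{1}{N-1}\bigprt{1+2\sum_{k=1}^{N-1}1/k}$ matches the paper's bracket via $\sum_{k=2}^{N-2}N/(k(N-k))=2\bigprt{\sum_{k=2}^{N-2}1/k}$, so your closing bookkeeping does come out right. What the paper's longer computation buys is the explicit rows of $Y$ and the first column of $Z$, which are not needed for the statement itself. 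One caveat in your favour: your forward substitution implicitly uses $\Melem{C}{k}{k}=-k(N-k)$ and $\Melem{C}{k+1}{k}=k(N-k)$, which is what the matrix $A$ of Proposition~\ref{prop:SIS:fji} actually requires; the displayed definition of $C$ in the proof of Lemma~\ref{lem:Annex:SIS_rho_expression} omits the factor $k$ (a typo, as the paper's own verification of $w^TC=v^T$ there confirms), and with the corrected $C$ your recursion indeed yields $y_k=1/(k(N-k))$ for $k\le N-2$ and $y_{N-1}=(N+1)/(2(N-1))$ as claimed.
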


\begin{proof}
    We identify the matrices $X$, $Y$ and $Z$ from Proposition~\ref{prop:Annex:SIS:XYZ_assumption} by imposing $\lclr^k\approx0$ for all $k\leq-2$, and
    \begin{align*}
        I 
        &= \prt{X\lclr+Y+Z\lclr^{-1}} \prt{\lclr^{-1}(2I-R)-C}\\
        &= \prt{\lclr^{-1}(2I-R)-C} \prt{X\lclr+Y+Z\lclr^{-1}}.
    \end{align*}

    This leads to six equations allowing identifying $X$, $Y$ and $Z$.
    For concision matters, we do not detail the whole algebra behind the identification.

    \begin{enumerate}[(i)]
        \item \underline{$CX = 0$ and $XC = 0$:}\\
        The identification gives $X = \alpha\mathbf e_N\ONE^T$ for some $\alpha\in\R$.
    
        \item \underline{$(2I-R)X-I = CY$:}\\
        This equality provides $\alpha = \tfrac12 \Rightarrow X = \tfrac12\mathbf e_N\ONE^T$.
        Moreover, it gives 
        \begin{equation*}
            \Melem{Y}{k}{*} = 
            \begin{cases}
                \frac{\sum\nolimits_{j=1}^k \mathbf e_j^T}{k(N-k)}  &k<N-1\\
                \frac{N+1}{2(N-1)}\ONE^T -\frac{\mathbf e_N^T}{N-1} &k=N-1.
            \end{cases}
        \end{equation*}
    
        \item \underline{$(2I-R)X-I = YC$:}\\
        The identification shows that $\Melem{Y}{N}{*} = \beta + \sum_{j=1}^{k-1} \frac{1}{j(N-j)}$ for some $\beta \in \R$.
    
        \item \underline{$(2I-R)Y-CZ = 0$:}\\
        This equality provides $\beta = -\tfrac12 - \sum\nolimits_{j=1}^{N-1} \tfrac{1}{j(N-j)}$.\\
        Also, the identification gives, for $1\leq k\leq N-1$:
        \begin{equation*}
            \Melem{Z}{k}{*} = \tfrac{1}{k(N-k)}\prt{k\Melem{Y}{k+1}{*} - \sum\nolimits_{j=1}^k \Melem{Y}{j}{*}}.
        \end{equation*}
    
        \item \underline{$Y(2I-R)-ZC = 0$:}\\
        Finally, the last equality allows identifying the first column of the matrix $Z$, which we will see later is sufficient to entirely characterize the wanted expression: 
        \begin{equation*}
            \small
            \Melem{Z}{k}{1} = 
            \begin{cases}
                \tfrac{1}{k(N-k)} \prt{\tfrac{k}{(k+1)(N-k-1)} - 2\sum_{j=1}^k \tfrac{1}{j(N-j)}} &k<N-2\\
                \tfrac{1}{2(N-2)} \prt{\tfrac{(N-2)(N+1)}{2(N-1)} - 2 \sum_{j=1}^{N-2} \tfrac{1}{j(N-j)}} &k=N-2\\
                \tfrac{-1}{N-1} \prt{\tfrac{N+1}{2} + (N+1)\sum_{j=1}^{N-1}\tfrac{1}{j(N-j)}}   &k=N-1.
            \end{cases}
        \end{equation*}
    \end{enumerate}

    We can then compute $E$ from expression (\ref{eq:lem:Annex:SIS_rho_expression}) using the approximation from Proposition~\ref{prop:Annex:SIS:XYZ_assumption} with the identified matrices $X$, $Y$ and $Z$.

    Expression (\ref{eq:lem:Annex:SIS_rho_expression}) using Proposition~\ref{prop:Annex:SIS:XYZ_assumption} reduces to
    \begin{align*}
        E
        &= 1 - v^TX\mathbf e_1\lclr - v^TY\mathbf e_1 - v^TZ\mathbf e_1 \lclr^{-1}\\
        &\ \ \ + w^TX\mathbf e_1 + w^T Y \mathbf e_1 \lclr^{-1} + \mathcal O \prt{\lclr^{-2}}.
    \end{align*}
    We identify each term using the identification of $X$, $Y$ and $Z$:
    \begin{itemize}
        \item $v^TX\mathbf e_1 \lclr = v^T\Melem{X}{*}{1}\lclr = \tfrac12\Velem{v}{N}\lclr = 0$;\\
    
        \item $v^TY\mathbf e_1 = v^T\Melem{Y}{*}{1} = 3/2$;\\ 
    
        \item $w^TX\mathbf e_1 = w^T\Melem{X}{*}{1} = 1/2$;\\
    
        \item $w^TY\mathbf e_1\lclr^{-1} = \frac{-1}{N-1}\prt{1/2 + \sum\nolimits_{k=1}^{N-1}(1/k)}\lclr^{-1}:$
        \begin{align*}
            w^TY\mathbf e_1 
            &= w^T\Melem{Y}{*}{1}
            = \sum\nolimits_{k=1}^N = \tfrac{k-1}{N-1}\Melem{Y}{k}{1}\\
            &= \sum\nolimits_{k=1}^{N-1}\tfrac{k-1}{N-1} \tfrac{1}{k(N-k)} + \tfrac{N-2}{2(N-1)} - \tfrac12 - \sum\nolimits_{k=1}^{N-1}\tfrac{1}{k(N-k)}\\ 
            &= \frac{-1}{N-1}\prt{\frac12 + \sum_{k=1}^{N-1}(1/k)}.
        \end{align*}\\

        \item $v^TZ\mathbf e_1 \lclr^{-1}= v^T\Melem{Z}{*}{1} \lclr^{-1} = \sum_{k=1}^{N-1} \tfrac{k(N-k)}{N-1}\Melem{Z}{k}{1}\lclr^{-1}.$
    
        For the sake of concision, we do not detail the algebraic steps. Using the identified first column of the matrix $Z$, and reducing the expression leads to
        \begin{equation*}
            v^TZ\mathbf e_1\lclr^{-1} = \tfrac{1}{N-1} \prt{-\sum_{k=2}^{N-2} \prt{\tfrac{N+k}{k(N-k)}} - \tfrac32\tfrac{3N-1}{N-1} } \lclr^{-1}.
        \end{equation*}
    \end{itemize}

    Combining every term together to retrieve the expression of $E$ gives then
    \begin{align*}
        E
        &= 1 -\tfrac32 + \tfrac12 - 0\lclr + \prt{w^TY\mathbf e_1-v^TZ\mathbf e_1}\lclr^{-1} + \mathcal O\prt{\lclr^{-2}}\\
        &= \frac{\lclr^{-1}}{N-1} \prt{\sum_{k=2}^{N-2}\prt{\tfrac{N+k}{k(N-k)}} + \tfrac32\tfrac{3N-1}{N-1} -\tfrac12 - \sum_{k=1}^{N-1}\tfrac1k } + \mathcal O\prt{\lclr^{-2}}\\
        &= \frac{\lclr^{-1}}{N-1} \prt{\sum_{k=2}^{N-2}\prt{\tfrac{N}{k(N-k)}} + \tfrac32\tfrac{3N-1}{N-1} -\tfrac32 - \tfrac{1}{N-1} } + \mathcal O\prt{\lclr^{-2}}\\
        &= \frac{\lclr^{-1}}{N-1} \prt{\sum_{k=2}^{N-2} \prt{\tfrac{N}{k(N-k)}} + \tfrac{3N-1}{N-1}} + \mathcal O \prt{\lclr^{-2}},
    \end{align*}
    which concludes the proof.
\end{proof}

\begin{proof}[Proof of Corollary~\ref{cor:SIS:steadyState}]
We can finally provide the proof for expression \eqref{eq:cor:SIS:XYZ_bound} from Corollary~\ref{cor:SIS:steadyState}. 
We start from the result of Proposition~\ref{prop:Annex:SIS:ALG_XYZbound}, and bound it from below.
The composite trapezoid rule for integrating convex functions allows writing
\begin{equation*}
    \int_2^{N-2}\frac{1}{x(N-x)}\mathrm dx \leq \sum_{k=2}^{N-3}\prt{\frac{1}{k(N-k)}+\frac{1}{(k+1)(N-k-1)}},
\end{equation*}
where the inequality is guaranteed by the convexity of the function $\frac{1}{x(N-x)}$.
Hence, it follows that 
\begin{equation*}
    \sum_{k=2}^{N-2}\frac{1}{k(N-k)} \geq \int_2^{N-2}\frac{1}{x(N-x)}\mathrm dx + \frac{1}{2(N-2)},
\end{equation*}
where the last term compensates those lacking from the trapezoid rule.

Moreover, one can compute the integral as follows, using partial fractions decomposition
\begin{align*}
    \int_2^{N-2}\frac{1}{x(N-x)}\mathrm dx
    &= \int_2^{N-2}\frac{1}{Nx}\mathrm dx + \int_2^{N-2}\frac{1}{N(N-x)}\mathrm dx\\
    &= \frac1N\bigbrk{\log(x) - \log(N-x)}_2^{N-2}\\
    &= \frac2N\log\prt{\frac{N-2}{2}}.
\end{align*}

Injecting the above inequality into expression (\ref{eq:prop:Annex:SIS:ALG_XYZbound}) from Proposition~\ref{prop:Annex:SIS:ALG_XYZbound} gives
\begin{align*}
    E 
    &\geq \frac{\lclr}{N-1} \brk{\frac{3N-1}{N-1} + \frac{N}{2(N-2)} + 2\log\prt{\frac{N-2}{2}}} + \mathcal O\prt{\lclr^{-2}}\\
    &\geq \frac{\lclr}{N-1} \brk{\frac72 + 2\log\prt{\frac{N-2}{2}} + \frac{N-5}{(N-2)(N-1)}} + \mathcal O\prt{\lclr^{-2}}.
\end{align*}
Injecting this expression into (\ref{eq:cor:SIS:symmetric_steadyState:symmetry_steadyState}) concludes the proof.
\end{proof}

\begin{IEEEbiography}[{\includegraphics[width=1in,height=1.25in,clip,keepaspectratio]{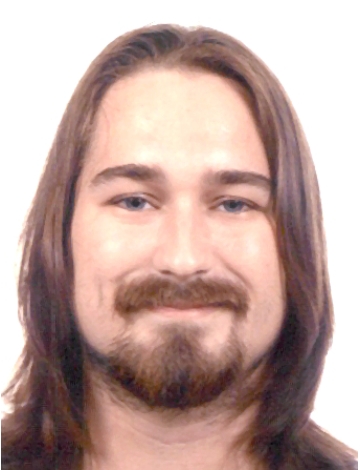}}]{Charles Monnoyer de Galland}
is a PhD student at UCLouvain, in the ICTEAM Institute, under the supervision of Professor Julien M. Hendrickx since 2018.
He is a FRIA fellow (F.R.S.-FNRS).

He obtained an engineering degree in applied mathematics (2018), and started a PhD in mathematical engineering the same year at the same university.
His research interests are centered around the analysis of open multi-agent systems.
\end{IEEEbiography}

\begin{IEEEbiography}[{\includegraphics[width=1in,height=1.25in,clip,keepaspectratio]{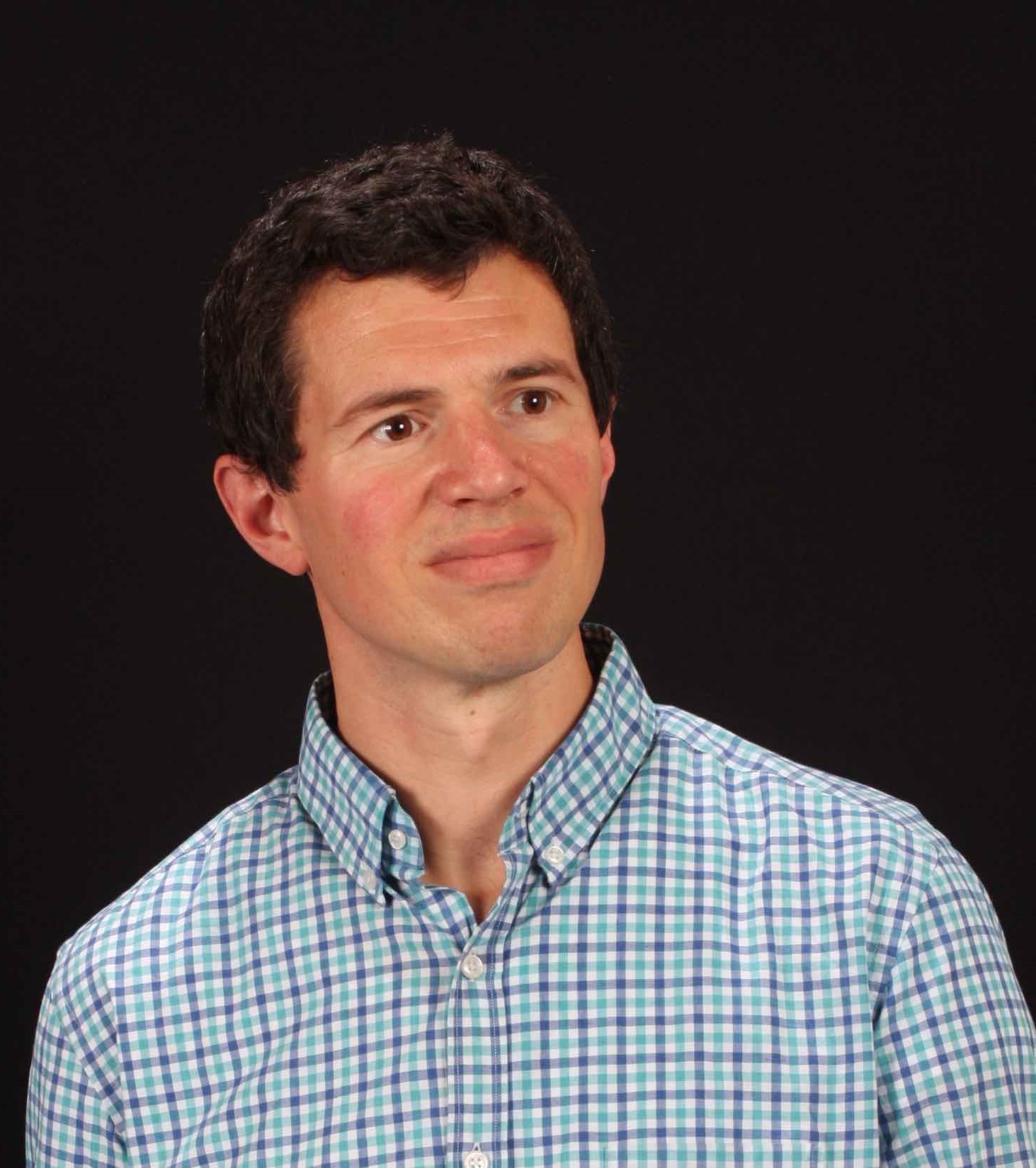}}]{Julien M. Hendrickx}
is professor of mathematical engineering at UCLouvain, in the Ecole Polytechnique de Louvain since 2010.

He obtained an engineering degree in applied mathematics (2004) and a PhD in mathematical engineering (2008) from the same university. He has been a visiting researcher at the University of Illinois at Urbana Champaign in 2003-2004, at the National ICT Australia in 2005 and 2006, and at the Massachusetts Institute of Technology in 2006 and 2008. He was a postdoctoral fellow at the Laboratory for Information and Decision Systems of the Massachusetts Institute of Technology 2009 and 2010, holding postdoctoral fellowships of the F.R.S.-FNRS (Fund for Scientific Research) and of Belgian American Education Foundation. He was also resident scholar at the Center for Information and Systems Engineering (Boston University) in 2018-2019, holding a WBI.World excellence fellowship.

Doctor Hendrickx is the recipient of the 2008 EECI award for the best PhD thesis in Europe in the field of Embedded and Networked Control, and of the Alcatel-Lucent-Bell 2009 award for a PhD thesis on original new concepts or application in the domain of information or communication technologies. 
\end{IEEEbiography}

\end{document}